\documentclass[11pt,a4paper]{article}
\usepackage{fullpage}
\usepackage{amsmath, amsthm, amsfonts,  amssymb, latexsym}
\usepackage[colorlinks, citecolor=Red, linkcolor=Blue, urlcolor=blue]{hyperref}
\usepackage{caption}
\usepackage[dvipsnames]{xcolor}
\usepackage{tikz}
\usepackage{enumerate}
\usepackage{cancel}


\usepackage{authblk}

\newtheorem{proposition}{Proposition}
\newtheorem{theorem}[proposition]{Theorem}
\newtheorem{lemma}[proposition]{Lemma}

\newtheorem{corollary}[proposition]{Corollary}
\newtheorem{remark}[proposition]{Remark}

\newcommand{\rhs}{r.h.s.\ }
\newcommand{\lhs}{l.h.s.\ }
\newcommand{\wrt}{w.r.t.\ }
\newcommand{\cf}{cf.\ }


\newcommand{\ibar}{{\frac{i}{\hbar}}}

\newcommand{\eti}[1]{e_\otimes^{\ibar #1}}
\newcommand{\et}[1]{e_\otimes^{#1}}

\newcommand{\ud}{\mathrm{d}}

\newcommand{\g}{\mathfrak{g}}
\newcommand{\p}{\mathfrak{p}}
\newcommand{\skal}[2]{\langle #1 , #2 \rangle}

\newcommand{\LB}[2]{\lfloor #1 , #2 \rfloor}

\newcommand{\sS}{\mathcal{S}}

\newcommand{\cR}{\mathcal{R}}

\newcommand{\cD}{\mathcal{D}}

\newcommand{\eps}{\varepsilon}

\newcommand{\N}{\mathbb{N}}

\newcommand{\ia}{{\mathrm{int}}}
\newcommand{\ret}{{\mathrm{r}}}

\newcommand{\lin}{{\mathrm{lin}}}

\newcommand{\vp}{{\varphi}}

\newcommand{\nn}{\nonumber}
\newcommand{\beq}{\begin{equation}}
\newcommand{\eeq}{\end{equation}}

\newcommand{\defeq}{\mathrel{:=}}

\newcommand{\vol}{\mathrm{vol}}

\DeclareMathOperator{\Tr}{Tr}
\DeclareMathOperator{\supp}{supp}
\DeclareMathOperator{\Deg}{Deg}

\newcommand{\cO}{{\mathcal{O}}}
\newcommand{\cA}{{\mathcal{A}}}

\newcommand{\ba}{{\bar a}}

\begin{document}

\title{Background independence and the Adler-Bardeen theorem}
\author{Jochen Zahn \\ Institut f\"ur Theoretische Physik, Universit\"at Leipzig\\ Br\"uderstr.\ 16, 04103 Leipzig, Germany \\ jochen.zahn@itp.uni-leipzig.de}

\date{\today}

\maketitle

\begin{abstract}
We prove that for renormalizable Yang-Mills gauge theory with arbitrary compact gauge group (of at most a single abelian factor) and matter coupling, the absence of gauge anomalies can be established at the one-loop level.
This proceeds by relating the gauge anomaly to perturbative agreement, which formalizes background independence.
\end{abstract}

\section{Introduction}

In its most general form, the Adler-Bardeen theorem states that in perturbative, renormalizable Yang-Mills gauge theory, gauge anomalies are absent if they are absent at the one-loop, or $\cO(\hbar)$, level.\footnote{The original statement \cite{AdlerBardeen}, see also \cite{Zee72, LowensteinSchroer73, Kopper:2011aa}, did actually not concern the gauge anomaly, but the chiral anomaly in QED. Recently, also a non-perturbative version of this was proven for lattice QED \cite{Mastropietro2020}.} It has been argued for at various levels of rigour and generality \cite{CostaEtAl77, BandelloniEtAl80, PiguetSorella92, PiguetSorella93, AnselmiAdlerBardeen}.\footnote{Recently, also non-perturbative (lattice) results in this direction were established \cite{Mastropietro:2020bhz, Mastropietro:2021dxm, Mastropietro:2023wnb}. However, not the full field content of the Standard Model is included and only a $U(1)$ gauge symmetry is considered in these approaches. Hence, while having a broader scope (in the sense of being non-perturbative), the mentioned results apply to a more restricted class of theories, not (yet) covering the Standard Model, contrary to our approach.} With the exception of \cite{AnselmiAdlerBardeen}, the Callan-Symanzik equation \cite{Callan70, Symanzik70}, and the quantum action principle \cite{Lowenstein71, Lam72} are employed.\footnote{By contrast, in \cite{AnselmiAdlerBardeen} a particular, rather non-trivial, regularization scheme is employed in which the anomaly cancellation to all orders is argued to be manifest.} However, the arguments put forward are not completely satisfactory: They are based on a particular (vacuum) state, while in generic backgrounds (of gravitational and/or gauge type), there is no preferred state. 
Furthermore, the generating (vertex) functionals used in these approaches are formal, as they are integrated over the entire spacetime, leading to potential infrared issues.
Finally, the ``algebraic'' approaches \cite{PiguetSorella92, PiguetSorella93} seem to apply only to the non-abelian anomaly, but not to abelian anomalies (which, for example, are most crucial to establish that anomaly freedom essentially fixes the matter representation in the Standard Model \cite{GengMarshak}).\footnote{Other approaches  \cite{CostaEtAl77, BandelloniEtAl80, AnselmiAdlerBardeen} also apply to some of the possible abelian anomalies (the first two of those listed in \eqref{eq:AbelianChiral} below), but apparently not to general abelian anomalies, see also Remark~\ref{rem:AbelianAnomaly} below.}

It is thus desirable to prove the Adler-Bardeen theorem in the rigorous framework of locally covariant field theory \cite{HollandsWaldWick, HollandsWaldTO, HollandsYM}, \cf also \cite{HollandsWaldReview, RejznerBook} for recent reviews or monographs. The author's attempts to adapt the above mentioned approaches by replacing the Callan-Symanzik equation by renormalization group transformations as defined in \cite{HollandsWaldRG} failed. However, it turns out that there is a close connection between gauge anomalies and anomalies of background independence, which one can exploit to prove the Adler-Bardeen theorem. Background independence is formalised in \emph{perturbative agreement} \cite{HollandsWaldStress, BackgroundIndependence}, which essentially states that quadratic terms can be equivalently included in the free or the interacting part of the action, see also \cite{BgInd} for a discussion. Typically, one thinks of these quadratic terms as arising from the free action by variations of some of the background geometric data, such as the metric \cite{HollandsWaldStress} or a background gauge connection \cite{BackgroundIndependence}. For our purposes, it is the latter notion which is of relevance.

A first indication of a connection between gauge anomalies and perturbative agreement is the fact that the latter actually has the Adler-Bardeen property: Perturbative agreement can be fulfilled to all orders, if a certain obstruction, visible at $\cO(\hbar)$, vanishes \cite{HollandsWaldStress, BackgroundIndependence}. Furthermore, the obstruction is, for the case of variations of a background gauge connection, the covariant divergence $T( \bar \nabla_\mu J_0^{\mu I})$ of the corresponding renormalized free current. 
Here $\bar \nabla$ indicates the covariant derivative \wrt the background gauge connection,  $J_0^{\mu I} \defeq \frac{\delta}{\delta \bar \cA^I_\mu} S_0$ is the functional derivative of the free part of the action \wrt the background gauge connection $\bar \cA$ (with $I$ a Lie algebra index), and $T$ denotes a renormalization scheme (a time-ordered product in the terminology used below).\footnote{Note that a time-ordered product with a single factor (as considered above) is often called a (renormalized) Wick product. However, in non-trivial backgrounds, already the renormalization of such Wick powers is non-trivial and ambiguous \cite{HollandsWaldWick} and it is very convenient to denote renormalized Wick powers and time-ordered products by the same symbol (for example in order to express certain relations between them, such as \eqref{eq:CausalFactorization} below, in a uniform notation).} In more formal approaches a non-vanishing $T( \bar \nabla_\mu J_0^{\mu I})$ is called a \emph{covariant anomaly}, and it has (formally) a well-understood relation \cite{BardeenZumino} to the \emph{consistent anomaly}, which governs the gauge anomaly. Hence, in practice, when determining violations of perturbative agreement, one computes the well-known anomalies, just in covariant form. 

The fact that an Adler-Bardeen type theorem has already been proven for perturbative agreement suggests the following strategy for proving the Adler-Bardeen theorem: When we assume vanishing of the covariant divergence $T( \bar \nabla_\mu J_0^{\mu I})$ of the renormalized free current (which is an $\cO(\hbar)$ statement, as $J_0$ is quadratic in the fields so that application of $T$ yields a further c-number term of $\cO(\hbar)$), we may also assume fulfillment of perturbative agreement \wrt changes in the background connection. These two properties turn out to be sufficient to prove absence of anomalies. To summarize, we prove the following:

\begin{theorem}
\label{thm:main}
Assume that, in a perturbatively renormalizable Yang-Mills gauge theory with a compact gauge group of at most a single $U(1)$ factor and arbitrary matter coupling on a globally hyperbolic four-dimensional spacetime, there exists a locally covariant renormalization scheme such that the covariant divergence $T( \bar \nabla_\mu J_0^{\mu I})$ of the free current vanishes (on-shell, when anti-fields are set to zero) for generic on-shell background gauge fields and vanishing background matter fields. Then there exists a locally covariant renormalization scheme in which gauge anomalies are absent for generic on-shell background gauge fields and vanishing background matter fields.
\end{theorem}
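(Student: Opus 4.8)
\emph{Proof strategy.} The plan is to combine the Adler--Bardeen property already established for perturbative agreement with a rigorous version of the Bardeen--Zumino relation between the covariant and the consistent anomaly, working throughout in the Batalin--Vilkovisky/Epstein--Glaser formulation of locally covariant gauge theory (\cf \cite{HollandsYM, RejznerBook}). First I would fix this framework: the gauge anomaly is the obstruction $\mathfrak{a}$ to the interacting quantum master equation, equivalently to the interacting master Ward identity for the BRST differential $s_0$; it is a local functional of ghost number one which, by the consistency condition inherited from $s_0^2=0$, is $s_0$-closed modulo total derivatives. On a four-dimensional globally hyperbolic spacetime, and for generic on-shell background gauge fields with vanishing background matter, the relevant relative cohomology $H^1(s_0 \mid d)$ is governed by the known results (Barnich--Brandt--Henneaux): $\mathfrak{a}$ is a finite linear combination of the non-abelian Adler--Bardeen anomaly and the abelian anomalies of \eqref{eq:AbelianChiral}, each carrying a c-number coefficient that is a formal power series in $\hbar$ beginning at order $\hbar$. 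It is precisely here that the restriction to at most one $U(1)$ factor enters (\cf Remark~\ref{rem:AbelianAnomaly}): it ensures that no further abelian coboundary obstructions arise and that every surviving coefficient is of the standard, detectable form. The goal is then to exhibit a locally covariant time-ordered product in which all these coefficients vanish.

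Second, I would feed in the hypothesis. By \cite{BackgroundIndependence}, the c-number field $T(\bn_\mu J_0^{\mu I})$ is, on-shell and with vanishing antifields, exactly the $\cO(\hbar)$ obstruction to perturbative agreement with respect to variations of the background connection $\bar\cA$ (it is $\cO(\hbar)$ because $J_0$ is quadratic in the dynamical fields); assuming that there is a scheme in which it vanishes for generic on-shell $\bar\cA$ and vanishing background matter, the Adler--Bardeen property of perturbative agreement \cite{HollandsWaldStress, BackgroundIndependence} provides a possibly modified locally covariant scheme in which perturbative agreement with respect to $\bar\cA$ holds to all orders in $\hbar$. From here on I work in such a scheme, which may still be altered by finite local counterterms without destroying this property.

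Third -- the substantive step -- I would use all-orders perturbative agreement to tie the consistent gauge anomaly to the covariant one. Perturbative agreement with respect to $\bar\cA$ says that the $\bar\cA$-dependent quadratic terms $\delta S_0/\delta\bar\cA$ may be freely shifted between the free and the interacting part of the action; holding it to all orders means that the $\bar\cA$-dependence of the interacting construction is undeformed by the quantum corrections beyond the single obstruction $T(\bn_\mu J_0^{\mu I})$, so that the covariant-anomaly content of the interacting theory (the quantum violation of $\bn_\mu J_0^{\mu I}\approx 0$) is ``one-loop exact'' and hence vanishes in the chosen scheme. Rewriting the consistent gauge anomaly $\mathfrak{a}$ -- evaluated, as is permitted for a local functional, at vanishing dynamical fields, so that it becomes a local functional of the ghost $c$ and of $\bar\cA$ alone -- in terms of this covariant current exhibits $\mathfrak{a}$, up to the $s_0$-variation of an explicit local counterterm (the rigorous avatar of the Bardeen--Zumino polynomial), as an expression built from $T(\bn_\mu J_0^{\mu I})$ and its descendants. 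Since these vanish, $\mathfrak{a}$ is a trivial class in $H^1(s_0\mid d)$, and the corresponding finite, locally covariant change of the time-ordered product removes it, yielding the asserted anomaly-free scheme.

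The step I expect to be the main obstacle is this third one: making the Bardeen--Zumino relation rigorous within locally covariant perturbation theory and controlling the interplay between (i) the full BRST differential, which also acts on $\bar\cA$ by background gauge transformations, and the purely quantum one whose anomaly is $\mathfrak{a}$; (ii) the restrictions to on-shell configurations and vanishing antifields already present in the hypothesis; and (iii) the fact that $J_0$ and $T(\bn_\mu J_0^{\mu I})$ refer only to the quadratic part $S_0$ of the action whereas $\mathfrak{a}$ is a property of the full interacting theory -- it is perturbative agreement that must bridge this last gap, so the argument has to use it in an essential way rather than as a technical convenience. Secondary points to verify are that ``generic on-shell background gauge field, vanishing background matter'' configurations indeed detect every coefficient appearing in $H^1(s_0\mid d)$, so that killing the covariant anomaly kills all of them, and that the finite counterterms invoked respect local covariance and the normalization conditions already imposed on the time-ordered products.
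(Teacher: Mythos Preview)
Your overall architecture --- use the hypothesis to secure perturbative agreement to all orders, then leverage this to kill the gauge anomaly --- matches the paper's. But your step three, which you yourself flag as the main obstacle, is not a proof sketch so much as a wish, and the paper's actual argument does \emph{not} proceed via a Bardeen--Zumino relation. Two concrete problems:

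\textbf{The cohomology is larger than you claim, and evaluation at zero fields does not detect it.} You assert that $\mathfrak{a}$ is a combination of the non-abelian anomaly and the abelian anomalies of \eqref{eq:AbelianChiral}. But as Remark~\ref{rem:AbelianAnomaly} (which you cite for the opposite purpose) points out, $H^0_4(s)$ contains many more classes --- e.g.\ $C^A \skal{\vp}{\vp}^2_V \vol$ or $C^A \Tr_N F_{\mu\nu}F^{\mu\nu}\vol$ --- and there is no known general argument excluding them. Your proposal to evaluate $\mathfrak{a}$ at vanishing dynamical fields kills precisely such matter-dependent abelian anomalies, so even if your Bardeen--Zumino step worked it would leave these undetected. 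The paper handles \emph{all} abelian anomalies uniformly (Proposition~\ref{prop:Abelian}) by a different mechanism: using that the abelian ghost enters $S_\ia$ only through the matter source term, one relates $\frac{\delta}{\delta C^A}A^m(\et{S_\ia})$ to the divergence of the \emph{interacting} current, which vanishes on-shell by \eqref{eq:PA_Thm_1}. This gives $G^A_i \approx_0 0$, which contradicts non-triviality in $H^0_4(s)$.

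\textbf{The non-abelian case requires a second background derivative, not a counterterm.} A rigorous Bardeen--Zumino argument would need to exhibit the consistent anomaly as $s$-exact once the covariant anomaly vanishes. The paper shows that perturbative agreement gives something weaker: the anomaly $a^m$ is background-independent only in cohomology, i.e.\ $\hat\cD_{\ba} a^m = s b^m_{\ba} + \ud c^m_{\ba}$, \cf \eqref{eq:D_a_a}. This does \emph{not} by itself trivialize the non-abelian anomaly --- indeed $b^m_{\ba}$ is computed explicitly in \eqref{eq:b_a_m} and is nonzero. The contradiction only appears after applying $\hat\cD$ a second time: perturbative agreement forces $\hat\cD b^m$ to be $s$-exact modulo $\ud$ (equation \eqref{eq:hat_d_b_s_chi}), whereas direct computation gives $(\hat\cD b^m)(\ba_1,\ba_2) = \tfrac{1}{2}\Tr_N[(\ba_1\ba_2-\ba_2\ba_1)\bar F]$, a nonzero c-number that is not $s$-exact. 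Along the way one must also show (Lemma~\ref{lemma:xi_ba_cohomology}) that certain auxiliary cohomologies are trivial, and that the redefinitions removing trivial anomalies preserve perturbative agreement (Lemma~\ref{lemma:D_consistency}, Lemma~\ref{lemma:Antifields_in_b}). None of this is visible from the formal Bardeen--Zumino heuristic.

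A minor point: the relevant cohomology is $H^1_4(s \mid \ud)$ for the full BV differential $s$, not $s_0$; the consistency condition \eqref{eq:ExpandedConsistencyCondition} involves $(S_\ia,\cdot)$ as well as $s_0$.
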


The terms ``locally covariant renormalization scheme'' and ``gauge anomaly'' will be explained below (the former one at least as far as necessary for our purposes). Let us first remark on the assumptions and the implications of the theorem:

\begin{itemize}

\item The restriction to at most a single $U(1)$ factor in the gauge group ensures the absence of so-called exceptional anomalies \cite{BarnichBrandtHenneaux00}. This assumption can be relaxed when the occurrence of the exceptional anomalies can be excluded otherwise. In any case, as the gauge group of the Standard Model only contains a single $U(1)$ factor, the theorem is applicable to the Standard Model (at least in the unbroken phase, see next item).

\item The restriction to  trivial matter background implies no restrictions on the matter action (except for the absence of a linear ``source'' term for the matter fields in the action), as the discussion is limited to the algebraic level, i.e., existence of states is not required. In particular, the vanishing matter background need not be stable (as long as it is a solution to the classical field equations). Nevertheless, in view of the Higgs mechanism in the Standard Model, an extension to non-trivial bosonic background matter fields is desirable, and will be sketched after establishing our main result.

\item The existence of a locally covariant renormalization scheme on globally hyperbolic spacetimes is well established \cite{HollandsWaldTO, HollandsWaldReview} (see also \cite{LocCovDirac} for the generalization to background gauge connections), irrespective of the gauge group or the matter content. Hence, the non-trivial crucial assumption is the vanishing of the covariant divergence $T( \bar \nabla_\mu J_0^{\mu I})$ of the free current. As discussed above, if it is satisfied, we may also assume perturbative agreement, which in turn will be used to rule out the occurrence of gauge anomalies. The crucial point is that this condition is easy to check in concrete examples, as it can be reduced to studying violations of the Hadamard parametrix being a solution to the equation of motion, as in \cite{ChiralFermions} for the case of chiral fermions in background gauge fields. Concretely, we can split the free current $J_0^{\mu I}$ into a contribution from the matter Lagrangian, and a contribution which only contains gauge and the corresponding auxiliary fields. That the covariant divergence of the latter part vanishes on-shell can be shown without detailed analysis of the Hadamard parametrix.\footnote{The ``gauge part'' of the free current has no component in the abelian direction, and this property is conserved by Hadamard point-split renormalization, so we can restrict consideration to the semi-simple part of the Lie algebra. The covariant divergence of this part of the current is a Lie algebra valued function (or rather a section of the bundle $\p$ defined below) of mass dimension four. 
However, for a semi-simple Lie algebra, one can not construct such sections covariantly out of the background fields, using the structure constants of the adjoint representation: Only the background field strength $\bar F^I_{\mu \nu}$ carries a Lie algebra index, and to contract it to a Lorentz scalar, one needs either $\bar F^I_{\mu \nu}$ or $* \bar F^I_{\mu \nu}$ (the star indicating the Hodge dual). But as the symmetric structure constants of the adjoint representation vanish, one can not construct a possible violation of current conservation in this way.}
The covariant divergence of the matter part for (chiral) fermions was calculated in \cite[Sect.~5]{ChiralFermions} 
and it vanishes if and only if the well-known criteria for the cancellation of the ``triangle anomalies'' are fulfilled, namely $\Tr [ \gamma^5 T^I \{ T^J, T^K \} ] = 0$ and\footnote{The latter condition is necessary for the cancellation of mixed gauge/gravitational anomalies.} $\Tr[ \gamma^5 T^I ] = 0$ with $T^I$ the generators of the matter representation and the trace over spinor and representation indices. That no contributions to the divergence of the free current arise from scalar matter fields can be easily seen by adapting the results of \cite{ChiralFermions}. Hence, we see that our result ensures the absence of gauge anomalies in the Standard Model.

\item ``Absence of gauge anomalies'' is to be understood in the usual sense of perturbative gauge theory: The gauge anomaly can be removed (inductively, order by order in $\hbar$) to arbitrarily high order in $\hbar$.

\end{itemize}

The proof of the above theorem is based on the following strategy: As is well-known \cite{BarnichBrandtHenneaux00, HollandsYM}, but also recalled below, the potential gauge anomalies can be classified cohomologically as elements of $H^1_4(s | \ud)$, i.e., 4-forms of ghost number 1, which are closed under the BV differential $s$, up to $\ud$ exact terms. By our assumption on the gauge group a potential anomaly is a linear combination of an abelian anomaly (a product of the abelian ghost and a gauge invariant field) and a non-abelian anomaly \cite{BarnichBrandtHenneaux00}. The general strategy is to show that both these types of anomalies are inconsistent with perturbative agreement. The first step of the proof is to show that a trivial gauge anomaly, i.e., a trivial element of $H^1_4(s | \ud)$, can be removed by redefinitions of time-ordered products (this is well known), while preserving perturbative agreement (this needs to be established). In the next step, we rule out abelian anomalies. Using the fact that the abelian ghost enters the interaction only through a ``source term'' generating abelian gauge transformations of the matter fields, we can relate an abelian anomaly to the divergence of the interacting current (the interacting version of the free current discussed above). By perturbative agreement, this vanishes, implying the absence of abelian anomalies. In the last step, we rule out non-abelian anomalies. This proceeds by noting that, due to perturbative agreement and results of \cite{BgInd}, the anomaly $a$ is background independent in cohomology, i.e., $\hat \cD a = s b + \ud c$, where $\hat \cD$ is a differential which measures the background dependence of fields (it vanishes when acting on a field which is background independent, i.e., does not depend on how the gauge connection is split into a background and a dynamical part). The field $b$ in the above identity can be explicitly determined for the non-abelian anomaly. But, again by perturbative agreement, also $\hat \cD b$ must be trivial in cohomology. This, however, is not fulfilled for the concrete $b$ corresponding to the non-abelian anomaly. This rules out the non-abelian anomaly and concludes the proof.

In the following, we will first recall some results of \cite{HollandsYM, BgInd} on gauge anomalies and background independence and prove some elementary lemmata. We will then, in Section~\ref{sec:Proof}, prove Theorem~\ref{thm:main}. In the concluding Section~\ref{sec:Conclusion} we in particular sketch the generalization to non-trivial bosonic background matter fields.

\section{Gauge anomalies and background independence}

We are working on general four-dimensional globally hyperbolic spacetimes $(M, g)$, equipped with an orientation and a time-orientation.
We consider a compact Lie group $G$ with at most a single abelian factor, i.e., a quotient by a discrete subgroup of a product of simple groups and at most one $U(1)$, and denote its Lie algebra by $\g$. On the latter, we choose an invariant negative definite symmetric bilinear form, which on the simple factors is a multiple of the Killing form. We consider principal $G$ bundles $P$ with base manifolds $M$, and denote by $\p$ the associated $\g$ vector bundle (whose sections would be ``Lie-algebra valued functions'' in physics parlance). On $P$, we consider connections $\cA$ whose curvatures $F$ are sections of $\p \otimes \Omega^2$ (``Lie-algebra valued two-forms''). Given a representation $\rho$ of $G$, we consider matter fields $\psi$ as sections of the associated vector bundle.\footnote{When spin $\frac{1}{2}$ matter fields are considered, $M$ should also be equipped with a spin structure.} The starting point of our considerations is then an action of the form
\beq
\label{eq:Action}
 \tilde S = \frac{1}{4} \int F^I_{\mu \nu} F^{I \mu \nu} \vol + S_m(\psi, \nabla \psi),
\eeq
with $\vol$ the volume four-form, indices $I$ labelling an appropriately normalized basis of $\g$, and $\nabla$ the covariant derivative \wrt the connection $\cA$.

Given a background configuration $\bar \cA$ of the gauge connection $\cA$, we consider perturbations $A$, i.e., we write
\beq
 \cA = \bar \cA + A
\eeq
where $A$ is a section of\footnote{The difference of any two principal bundle connections $\cA$, $\cA'$ can be naturally interpreted as a section of this bundle.} $\p \otimes \Omega^1$ (a ``Lie algebra valued one-form''). We then consider $\bar \cA$ as a geometric datum on the same footing as the metric $g$ (see \cite{LocCovDirac} for the formulation of local covariance including background gauge connections). An important case is that of $\bar \cA$ being on-shell, i.e., a configuration extremizing \eqref{eq:Action} with $\psi = 0$. Expanding $\tilde S$ in $A$, the component $\tilde S_{-1}$ which is of first order in $A$ then drops out. We shall from now on restrict to that case and comment on deviations for off-shell backgrounds below. We denote by $\sS$ the ``manifold'' of on-shell backgrounds and by $T \sS$ its tangent space, consisting of infinitesimal perturbations $\bar a$ of $\bar \cA$, which are solutions to the linearized equations of motion.\footnote{Strictly speaking, the set of solutions to the Yang-Mills equation is a manifold only up to singular points corresponding to solutions possessing certain symmetries \cite{Arms1981}, i.e., the linearized equations around these special configurations have solutions which do not correspond to infinitesimal variations of the full solution. This does not affect our considerations, as we work locally in configuration space so that we can avoid these singular configurations.} 

The free part $\tilde S_0$ of the action, which is of second order in $(A, \psi)$, does not give rise to hyperbolic equations of motion, i.e., a well-posed initial value formulation. To remedy this, we follow the standard procedure (see, for example, \cite{HollandsYM, RejznerBook, DuetschBook}) and first introduce Lagrange multipliers $B$, ghosts $C$, and anti-ghosts $\tilde C$, all of which are sections of $\p$, and define the Batalin-Vilkovisky (BV) operator $s$ by (the commutators are Lie algebra commutators and $\bar \nabla$ the covariant derivative induced by the background connection $\bar \cA$)\footnote{With a slight abuse of notation, we here denote the Lie algebra representation on the matter fields by the same symbol as the group representation $\rho$ from which it is derived.}
\beq
\label{eq:s}
 s A^I_\mu = \bar \nabla_\mu C^I + [A_\mu, C]^I, \quad
 s C^I = - \frac{1}{2} [C, C]^I, \quad
 s \tilde C^I = B^I, \quad
 s B^I = 0, \quad
 s \psi = - \rho(C) \psi.
\eeq
We collectively denote these fields by $\Phi = (A_\mu^I, B^I, C^I, \tilde C^I, \psi)$ and attribute a ghost number $(0,0,1,-1,0)$ and mass dimension $(1,2,0,2, *)$, with the mass dimension of the matter fields the canonical one ($1$ for scalars, $\frac{3}{2}$ for spin $\frac{1}{2}$ fermions). Then $s$ increases the ghost number by one, but does not increase the mass dimension. The Grassmann parity of the fields is the same as their ghost number modulo 2, except for fermionic matter fields, which have Grassmann parity $1$. The BV differential is nil-potent when graded \wrt the Grassmann parity.

We also introduce anti-fields $\Phi^\ddag = (A^{\mu I \ddag}, B^{I \ddag}, C^{I \ddag}, \tilde C^{I \ddag}, \psi^\ddag)$, which we interpret as densities with ghost numbers $(-1, -1, -2, 0, -1)$. We supplement the action with a source term
\beq
 S_{\mathrm{source}} = - \int \sum_i s \Phi^i \Phi_i^\ddag,
\eeq
with $i$ labelling the different fields. With this definition $s F = (S_{\mathrm{source}}, F)$
for any local functional $F$ not depending on anti-fields, with the anti-bracket of local functionals defined by
\beq
 (F, G) \defeq \int \left\{ \frac{\delta^R F}{\delta \Phi^i(x)} \frac{\delta^L G}{\delta \Phi^\ddag_i(x)} - \frac{\delta^R F}{\delta \Phi^\ddag_i(x)} \frac{\delta^L G}{\delta \Phi^i(x)} \right\}.
\eeq
The anti-bracket fulfills a graded symmetry and a graded Jacobi identity, see \cite{TehraniBRST}, for example.

Gauge fixing is performed by choosing a gauge-fixing fermion $\Psi$, which we take to be of the form
\beq
\label{eq:Xi}
 \Psi = \int \tilde C^I \left( \bar \nabla^\mu A^I_\mu + \tfrac{1}{2} B^I \right) \vol.
\eeq
We then supplement the action by an additional gauge fixing term $s \Psi$.
The new action $S = \tilde S + S_{\mathrm{source}} + s \Psi$ thus obtained consists of a free part $S_0$ containing all terms of second order in (anti-) fields, and an interacting part $S_\ia$ containing all terms of higher order in (anti-) fields.\footnote{Note that we drop the terms of $\tilde S$ of order lower than 2 in the fields (the terms of first order in the fields are anyway a total derivative by our assumption that the background connection is on-shell).} Then we extend the definition of $s$ to functionals involving anti-fields by
\beq
\label{eq:s_F_(S,F)}
 s F = (S, F).
\eeq

The original action depended on $\bar \cA$, $A$ only through the combination $\bar \cA + A$, and was in this sense background independent. Analogously, we may say that a classical local functional $F$ is background independent, if it only depends on this combination, or in other words, if for any $\ba \in T \sS$,
\beq
\label{eq:cD_F_0}
 \cD_{\ba} F \defeq \bar \delta_{\ba} F - \delta_{\ba} F = 0,
\eeq
where $\bar \delta_{\ba}$ denotes the functional derivative \wrt the background field $\bar \cA$ in the direction $\ba$, defined as
\beq
  \bar \delta_\ba F[\bar \cA, \Phi, \Phi^\ddag] \defeq \frac{\ud}{\ud \tau} F[\bar \cA + \tau \ba, \Phi, \Phi^\ddag]|_{\tau = 0},
\eeq
while $\delta_{\ba}$ denotes the functional derivative \wrt the dynamical field $A$, in the same direction. The gauge fixing destroys the background independence of the action, as we have the formal\footnote{This is only formal as the integrals will in general not be convergent. This will be remedied below, in \eqref{eq:BrokenBackgroundIndependenceCutoff}, after we introduce an adiabatic cutoff for the interaction.} identity \cite{BgInd}
\beq
\label{eq:BrokenBackgroundIndependence}
 \bar \delta_{\ba} S - \delta_{\ba} S_\ia = s \bar \delta_{\ba} \Psi.
\eeq
The appropriate replacement of the condition \eqref{eq:cD_F_0} for a classical local functional $F$ to be background independent is then \cite{BgInd}
\beq
 \hat \cD_{\ba} F \defeq \cD_{\ba} F - (F, \cD_{\ba} \Psi) = 0.
\eeq
Note that, as $\cD_\ba \Psi$ does not contain anti-fields, $\hat \cD_\ba$ and $\cD_\ba$ coincide on functionals $F$ not containing anti-fields.
From now on, we think of $\ba$ not as a vector, i.e., an element of $T \sS$, but as a vector field, i.e., a section of $T \sS$. The operator $\hat \cD$ then fulfills 
\cite[Thm.~3.3]{BgInd}
\begin{align}
 \hat \cD_{\ba} (F_1, F_2) & =  ( \hat \cD_{\ba} F_1, F_2) +  (F_1, \hat \cD_{\ba} F_2), \\
\label{eq:s_cD}
 s \hat \cD_{\ba} F & = \hat \cD_{\ba} s F, \\
\label{eq:cD_cD}
 \hat \cD_{\ba} \hat \cD_{\ba'} F - \hat \cD_{\ba'} \hat \cD_{\ba} F & = \hat \cD_{\LB{\ba}{\ba'}} F,
\end{align}
where $\LB{ \cdot }{ \cdot }$ denotes the Lie bracket of vector fields. In view of the last equality, it is natural to interpret $\hat \cD_{\ba} F = (\hat \cD F)(\ba)$ as the evaluation of the ``background variation one-form'' $\hat \cD F$ in the vector field $\ba$. We can thus promote $\hat \cD$ to a differential, which, by \eqref{eq:s_cD}, anticommutes with $s$ (if the grading of background variation forms is taken into account). Put differently, one treats the perturbation $\ba$ as a new fermionic field variable.\footnote{A similar approach was used in \cite{KlubergSternZuber1}, see also Rem.~3.5 of \cite{BgInd} for further references.} This is discussed in more detail in Section~\ref{sec:Proof} below.

From the free part $S_0$ of the action, one obtains hyperbolic equations of motions for the fields $\Phi^i$, giving rise to a causal propagator $\Delta^{ij}$ being the difference of retarded and advanced propagator. One then defines a $\star$ product of functionals\footnote{Technically, one restricts to \emph{microcausal functionals}, which contain the local functionals and close under the $\star$ product \cite{DuetschFredenhagenLoopExpansion, DuetschFredenhagenDeformation}.} such that
\beq
 \Phi^i(x) \star \Phi^j(x') - (-1)^{\eps_i \eps_j} \Phi^j(x') \star \Phi^i(x) = i \hbar \Delta^{ij}(x, x'),
\eeq
with $\eps_i$ the Grassmann parity of $\Phi^i$. The anti-fields (anti-) commute with all other fields \wrt the $\star$ product. The $\star$ product preserves the grading $\Deg = 2 \deg_\hbar + \deg_{\Phi^{(\ddag)}}$ with $\deg_\hbar$ counting the factors of $\hbar$ and $\deg_{\Phi^{(\ddag)}}$ the total number of fields and anti-fields. The $\Deg$ grading can be used to make sense of the series typically occurring in perturbation theory, i.e., equations relating such series can be understood as equations at each order in $\Deg$.\footnote{This grading also naturally occurs in the context of deformation quantization, \cf \cite{WaldmannBook}.}

As in \cite{HollandsWaldTO, HollandsYM}, one defines local and covariant time-ordered products $T(F_1 \otimes \dots \otimes F_n)$ of local functionals $F_j$, which can be inductively fixed up to the total diagonal by their defining property of causal factorization, i.e.,
\beq
\label{eq:CausalFactorization}
 T(F_1 \otimes \dots \otimes F_n) = T(F_1 \otimes \dots \otimes F_k) \star T(F_{k+1} \otimes \dots \otimes F_n)
\eeq
whenever  $\supp F_i$ does not intersect the causal past of $\supp F_j$ for all $1\leq i \leq k$ and $k+1 \leq j \leq n$,
and which in particular are graded symmetric and fulfill field independence
\beq
\label{eq:T_FieldIndependence}
 \tfrac{\delta}{\delta \Phi(x)} T(F_1 \otimes \dots \otimes F_n) = \sum_{k = 1}^n T(F_1 \otimes \dots \otimes \tfrac{\delta}{\delta \Phi(x)} F_k \otimes \dots \otimes F_n).
\eeq
Apart from the axioms listed in \cite{HollandsWaldTO, HollandsYM} (we refrain from listing and explaining all the axioms, instead focusing on those that are relevant for our considerations) we also require that a time-ordered product with a single field factor simplifies as (for notational convenience, we assume the $F_k$ to be Grassmann even) \cite{HollandsWaldStress}
\begin{align}
 T( \Phi^i(x) \otimes F_1 \otimes \dots \otimes F_n ) & = \Phi^i(x) \star T( F_1 \otimes \dots \otimes F_n ) \nn \\
 & \quad + i \hbar \int \Delta_-^{ij}(x, y) \sum_{k =1}^n T( F_1 \otimes \dots \otimes \tfrac{\delta}{\delta \Phi^j(y)} F_k \otimes \dots \otimes F_n),
\end{align}
where $\Delta_-$ is the advanced propagator and integration over $y$ is understood (recall that we interpret $\frac{\delta}{\delta \Phi^j(y)} F$ as a density). Furthermore, time-ordered products respect the $\Deg$ grading and can be assumed to act trivially on anti-fields (here $\cO$ is an arbitrary local field):\footnote{As the anti-fields (anti-) commute with all other fields, this assumption is consistent with the other axioms. As it is not stated explicitly in \cite{HollandsYM}, we do so here, even though it is usually (implicitly) assumed in the literature.} 
\beq
\label{eq:T_trivial_on_antifields}
 T( \Phi^\ddag_i(x) \cO(x) \otimes F_1 \otimes \dots \otimes F_n) = \Phi^\ddag_i(x) T( \cO(x) \otimes F_1 \otimes \dots \otimes F_n). 
\eeq
As shown in \cite{HollandsWaldTO, BrunettiFredenhagenScalingDegree}, time-ordered products complying to these requirements can be constructed recursively, using the Epstein-Glaser method \cite{EpsteinGlaser} (extension of distributions to the diagonal).

Time-ordered products are not unique, but are subject to renormalization ambiguities encoded in the \emph{main theorem of renormalization} \cite{PopineauStora, HollandsWaldRG}, which is most conveniently formulated in terms of generating functionals
(equations involving such generating functionals have to be understood as yielding identities order by order in $F$)
\beq
 T(\eti{F}) \defeq 1 + \sum_{n=1}^\infty \frac{(\ibar)^n}{n!} T(F^{\otimes n}).
\eeq
It states that any two time ordered products $T$, $T'$ are related by
\beq
\label{eq:RenormalizationChange}
 T'(\eti{F}) = T(\eti{(F + D(\et{F}))})
\eeq
where $D(F_1 \otimes \dots \otimes F_n)$ are local functionals (supported on the total diagonal of the $F_k$'s), and at least of $\cO(\hbar)$.\footnote{Note that we do not assume that $D(F) = 0$.} They inherit many of the properties of time-ordered products, such as field independence, i.e., \eqref{eq:T_FieldIndependence} also holds for $T$ symbols replaced by $D$ symbols. With respect to the $\Deg$ grading, they fulfill
\beq
\label{eq:Deg_T_c}
 \Deg D(F_1 \otimes \dots \otimes F_n) = \sum_{i = 1}^n \Deg (F_i) - 2 (n-1).
\eeq
Furthermore, if $F_k$ are integrated fields of mass dimension $d_k$, then $D(F_1 \otimes \dots \otimes F_n)$ is an integrated field of mass dimension $4 + \sum_{k} (d_k - 4)$ (assuming that the test sections used for the smearing are attributed vanishing mass dimension).

It is sometimes also convenient to consider connected time-ordered products, which can be implicitly defined by\footnote{Note that this definition deviates from the one employed in \cite{DuetschFredenhagenLoopExpansion}, where the classical product replaces the $\star$ product.}  \cite{FrobBV}
\beq
 T(\eti{F}) = \exp_\star( \tfrac{i}{\hbar} T_c( \et{F} )),
\eeq
where
\beq
 \exp_\star(G) = 1 + \sum_{n=1}^\infty \frac{1}{n!} \underbrace{ G \star \dots \star G }_{n \text{ factors }}.
\eeq
One advantage of connected time-ordered products (which is also the reason why we consider them in the following) is that they are formal power series in $\hbar$ \cite[Thm.~1]{FrobBV} (in contrast to $T(\eti{F}$)). With respect to the $\Deg$ grading, they fulfill the same relation \eqref{eq:Deg_T_c} as the renormalization map $D$.

The \emph{free BV differential} $s_0$ is the linear part of $s$, i.e., it acts on fields as
\begin{align}
 s_0 A^I_\mu & = \bar \nabla_\mu C^I, &
 s_0 C^I & = 0, &
 s_0 \tilde C^I & = B^I, &
 s_0 B^I & = 0, &
 s_0 \psi & = 0,
\end{align}
and more generally on local functionals by $s_0 F = (S_0, F)$. Like $s$, it is nilpotent.
The $\star$ product can be defined such that $s_0$ fulfills a graded Leibniz rule \wrt to it \cite{HollandsYM, BgInd}.
The incompatibility of time-ordered products with $s_0$ is encoded in the anomaly, through the \emph{anomalous Ward identity} \cite[Props.~2,~4,~5]{HollandsYM}:
\begin{theorem}
We have\footnote{In \cite{HollandsYM} the anomaly is defined via time-ordered products instead of connected time-ordered products. These definitions can easily be seen to be equivalent.}
\beq
 s_0 T_c(\et{F}) = T_c( \{ s_0 F + \tfrac{1}{2} (F,F) + A(\et{F}) \} \otimes \et{F} ).
\eeq
The anomaly $A$ maps tensor products of local functionals linearly to locals functionals. It increases the ghost number by one and is at least of $\cO(\hbar)$.
It is subject to the consistency condition
\beq
 s_0 A(\et{F}) + (F, A(\et{F})) + A( \{ s_0 F + \tfrac{1}{2} (F, F) + A(\et{F}) \} \otimes \et{F}) = 0.
\eeq
In particular, if $A^m(\et{F})$ is the first non-trivial term in the $\hbar$ expansion of $A(\et{F})$, then
\beq
\label{eq:ExpandedConsistencyCondition}
 s_0 A^m(\et{F}) + (F, A^m(\et{F})) + A^m( \{ s_0 F + \tfrac{1}{2} (F, F) \} \otimes \et{F}) = 0.
\eeq
If $T$ and $T'$ are two renormalization schemes related by \eqref{eq:RenormalizationChange}, the corresponding anomalies $A$ and $A'$ are related by
\begin{multline}
\label{eq:AnomalyRedefinition}
 s_0 D(\et{F}) + (F, D(\et{F})) + \tfrac{1}{2} (D(\et{F}), D(\et{F})) + A(\et{F + D(\et{F})})\\ = D( \{ s_0 F + \tfrac{1}{2} (F, F) + A'(\et{F}) \} \otimes \et{F}) + A'(\et{F}).
\end{multline}
\end{theorem}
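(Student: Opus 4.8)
The plan is to establish the anomalous Ward identity by the standard inductive Epstein--Glaser argument of \cite[Props.~2,~4,~5]{HollandsYM}, and then to obtain the consistency condition and the scheme-transformation law by purely algebraic manipulations. First I would simply \emph{define} $A$ by the displayed identity $s_0 T_c(\et{F}) = T_c(\{ s_0 F + \tfrac12(F,F) + A(\et{F}) \} \otimes \et{F})$, read order by order in the $\Deg$ grading (equivalently, recursively in the number of tensor factors), and verify the asserted properties one at a time. The crucial structural input is that the $\star$ product can be chosen so that $s_0$ is a graded derivation for it \cite{HollandsYM}, combined with causal factorization \eqref{eq:CausalFactorization} and field independence \eqref{eq:T_FieldIndependence}: if the supports of the $F_j$ split into two causally separated clusters, then applying $s_0$ to $T_c(\et{F})$ and using the Leibniz rule shows that $s_0 T_c(\et{F}) - T_c(\{ s_0 F + \tfrac12(F,F)\}\otimes\et{F})$ factorizes in the same clustered way. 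By induction its new contribution to $A$ is therefore supported on the total diagonal, hence $A(\et{F})$ is a local functional; the scaling-degree estimates of the Epstein--Glaser construction \cite{EpsteinGlaser, BrunettiFredenhagenScalingDegree} bound its mass dimension, and the $\Deg$ grading of $T_c$ (the analogue of \eqref{eq:Deg_T_c}) fixes its polynomial degree in the fields.

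That $A$ is at least of $\cO(\hbar)$ follows by passing to the classical limit: at order $\hbar^0$ the connected time-ordered product degenerates to the classical (retarded) product, for which $s_0 T_c(\et{F}) = T_c(\{ s_0 F + \tfrac12(F,F)\}\otimes\et{F})$ holds identically, this being the classical master Ward identity, a consequence of the Leibniz rule for $s_0$ and the definition of the antibracket. The ghost-number count is immediate: $s_0$ and the antibracket each raise the ghost number by one and $T_c$ preserves it, so $A$ must raise it by one.

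Next I would derive the consistency condition by applying $s_0$ once more to the anomalous Ward identity and using $s_0^2 = 0$. Writing $Q_F \defeq s_0 F + \tfrac12(F,F) + A(\et{F})$, one has $0 = s_0\bigl(s_0 T_c(\et{F})\bigr) = s_0 T_c(Q_F \otimes \et{F})$; inserting the anomalous Ward identity again on the right-hand side (now with the extra insertion $Q_F$), and using the graded Jacobi identity for the antibracket together with the classical relation $s_0(s_0 F + \tfrac12(F,F)) + (F, s_0 F + \tfrac12(F,F)) = 0$, all terms not involving $A$ cancel, leaving precisely $s_0 A(\et{F}) + (F, A(\et{F})) + A(Q_F \otimes \et{F}) = 0$. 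Expanding this in $\hbar$ and keeping the lowest nontrivial order, at which the inner $A(\et{F})$ hidden in $Q_F$ cannot yet contribute, gives \eqref{eq:ExpandedConsistencyCondition}, i.e.\ $A^m$ is a cocycle of the classical BV differential $s_0 + (F,\,\cdot\,) + \dots$ acting on local functionals. Finally, for the comparison of the anomalies of two schemes I would apply $s_0$ to the main theorem of renormalization \eqref{eq:RenormalizationChange}, rewritten in terms of connected time-ordered products, using the anomalous Ward identity for $T'$ on the left and for $T$ with argument $F + D(\et{F})$ on the right, and field independence of both $T$ and $D$ to handle the chain rule through the nonlinear argument $F + D(\et{F})$; collecting terms yields \eqref{eq:AnomalyRedefinition}.

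I expect the main obstacle to be the locality step: one must check that the Epstein--Glaser extension of the defect to the total diagonal can be performed compatibly with \emph{all} the imposed axioms simultaneously --- graded symmetry, field independence \eqref{eq:T_FieldIndependence}, the $\Deg$ grading, and triviality on anti-fields \eqref{eq:T_trivial_on_antifields} --- and one must track carefully the graded and Grassmann signs produced by $s_0$ and by the antibracket throughout. These are exactly the points worked out in detail in \cite{HollandsYM, HollandsWaldStress}, which I would follow.
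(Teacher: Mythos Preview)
The paper does not prove this theorem; it is quoted from \cite[Props.~2,~4,~5]{HollandsYM} (with the minor reformulation in terms of connected rather than ordinary time-ordered products, noted in the footnote). Your outline is exactly the standard argument of that reference --- define $A$ recursively by the defect of the classical master Ward identity, establish locality via causal factorization and induction on the number of factors, read off the $\hbar$- and ghost-number properties from the $\Deg$ grading and the classical limit, derive the consistency condition from $s_0^2 = 0$, and obtain \eqref{eq:AnomalyRedefinition} by applying $s_0$ to \eqref{eq:RenormalizationChange} --- so there is nothing to compare: your proposal and the paper's cited source coincide.

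One small point to watch in your sketch: when you ``insert the anomalous Ward identity again on the right-hand side (now with the extra insertion $Q_F$)'', you need the identity with an explicit insertion, i.e.\ $s_0 T_c(G \otimes \et{F})$, not just $s_0 T_c(\et{F})$. This is obtained by functional differentiation of the anomalous Ward identity in the direction $G$ (using field independence, or equivalently by replacing $F \mapsto F + \tau G$ and expanding), which produces the extra anomaly term $A(G \otimes \et{F})$; since $Q_F$ is Grassmann-odd one uses an odd parameter. This is routine, but it is the step where the term $A(Q_F \otimes \et{F})$ in the consistency condition actually arises, so it should be made explicit.
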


The anomaly $A$ inherits many of the properties of time-ordered products. In particular, it is field (and anti-field) independent,
\beq
\label{eq:AnomalyFieldIndependence}
  \tfrac{\delta}{\delta \Phi(x)} A(F_1 \otimes \dots \otimes F_n) = \sum_{k = 1}^n A(F_1 \otimes \dots \otimes \tfrac{\delta}{\delta \Phi(x)} F_k \otimes \dots \otimes F_n),
\eeq
vanishes if one of the factors is a linear field \cite[App.~A]{BgInd}, and fulfills the same relation \eqref{eq:Deg_T_c} to the $\Deg$ grading as the renormalization map. 
Also its relation to the mass dimension is as described for the renormalization map $D$ below \eqref{eq:Deg_T_c} \cite[Prop.~2]{HollandsYM}.

In order to define the algebra of interacting fields for a space-time region $\cR$, one first introduces a compactly supported adiabatic cutoff $\lambda(x)$ with $\lambda|_\cR = 1$ in the interacting part $S_{\ia}$ of the action. We do this in such a way that cubic terms of the Lagrangian are multiplied with $\lambda(x)$ while the quartic ones are multiplied with $\lambda^2(x)$. The generating functional for interacting local functionals $F$ supported inside $\cR$ is then (the inverse on the \rhs is \wrt $\star$)
\beq
 T^\ia(\eti{F}) = T(\eti{S_\ia})^{-1} \star T(\eti{F} \otimes \eti{S_\ia}),
\eeq
and these generate the interacting algebra for the space-time region $\cR$. The algebra thus obtained is independent on the adiabatic cutoff $\lambda$ (up to unitary equivalence) \cite{BrunettiFredenhagenScalingDegree}.

The interacting fields thus obtained are in general not gauge invariant, i.e., observables. In the classical theory, the gauge invariant, on-shell observables are given by the cohomology of $s$ at vanishing ghost number. In the quantum theory, there are quantum corrections due to anomalies. The \emph{quantum BV differential} \cite{HollandsYM, TehraniBRST, FrobBV}
\beq
 q F \defeq s F + A( F \otimes \et{S_\ia}),
\eeq
turns out to be nil-potent, i.e., a proper differential, in case that 
\beq
\label{eq:AbsenceGaugeAnomaly}
 \supp A(\et{S_\ia}) \cap \cR = \emptyset,
\eeq
i.e., the \emph{gauge anomaly} $A(\et{S_\ia})$ vanishes in the region where the interacting observables are considered.\footnote{In \cite{HollandsYM, TehraniBRST, FrobBV}, it is required that $A(\et{S_\ia})=0$, but it is easy to see that \eqref{eq:AbsenceGaugeAnomaly} is sufficient.}
In this case, we say that gauge anomalies are absent. Then the cohomology of $q$ determines the interacting local observables in the quantum theory. It is thus crucial to understand the potential obstructions to achieving \eqref{eq:AbsenceGaugeAnomaly}. 

The gauge anomaly $A(\et{S_\ia})$ is a local functional, which can be expressed as
\beq
\label{eq:AnomalyExpansion}
 \sum_{n=1}^\infty \sum_\alpha \int a^n_{\alpha_1 \dots \alpha_n} \nabla^{\alpha_1} \lambda \dots \nabla^{\alpha_n} \lambda \vol,
\eeq
with $\alpha_k$ multiindices and $a^n_{\alpha_1 \dots \alpha_n}$ determined locally and covariantly out of (anti-) fields and background fields and symmetric under interchange $\alpha_i \leftrightarrow \alpha_j$. The above expression can be understood in the sense of a formal power series in $\lambda$. However, with our choice of adiabatic cutoff in $S_\ia$, the expansion in powers of $\lambda$ is equivalent to an expansion in the Deg grading, as
$\Deg(a^n_{\alpha_1 \dots \alpha_n}) = n + 2$.
We have the following:

\begin{lemma}
\label{lemma:field}
An expression of the form \eqref{eq:AnomalyExpansion} with
\beq
\label{eq:Deg_n_relation}
 \Deg(a^n_{\alpha_1 \dots \alpha_n}) = n + c
\eeq
for some $c \in \N$
vanishes for all compactly supported $\lambda$ (order by order in $\Deg$) only if, for all $n$,
\beq
\label{eq:a_n_nabla_b}
 a^n_{\emptyset \dots \emptyset} = \nabla^\mu b^n_\mu
\eeq
for some local field $b^n_\mu$.
Hence, an expression of the form \eqref{eq:AnomalyExpansion} determines a four-form
\beq
\label{eq:aDef}
 a = \sum_{n=1}^\infty a^n_{\emptyset \dots \emptyset} \vol
\eeq
up to a total derivative, i.e., modulo $\ud$, in the following sense: Two expression of the form \eqref{eq:AnomalyExpansion} coincide for all compactly supported $\lambda$ if and only if the corresponding four-forms $a$ defined by \eqref{eq:aDef} coincide modulo an exact form.
\end{lemma}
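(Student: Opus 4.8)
The plan is to reduce everything to a fixed number $n$ of cutoff factors and then to isolate the undifferentiated coefficient $a^n_{\emptyset\dots\emptyset}$ by testing against locally constant $\lambda$. Since $c$ in \eqref{eq:Deg_n_relation} is fixed and $\lambda$ together with its covariant derivatives carries vanishing $\Deg$, the summand with $n$ factors of $\lambda$ in \eqref{eq:AnomalyExpansion} sits entirely in $\Deg$-degree $n+c$; distinct $n$ therefore contribute in distinct $\Deg$-degrees, so the hypothesis that \eqref{eq:AnomalyExpansion} vanish order by order in $\Deg$ is equivalent to the vanishing, for every fixed $n$ and every compactly supported $\lambda$, of the local functional
\[
 G^n[\lambda] \defeq \int \sum_\alpha a^n_{\alpha_1\dots\alpha_n}\, \nabla^{\alpha_1}\lambda\cdots\nabla^{\alpha_n}\lambda\,\vol .
\]
It thus suffices to fix $n$.

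As $G^n$ vanishes identically as a functional of $\lambda$, so does its first functional derivative; writing this derivative as the Euler--Lagrange expression of the integrand, i.e.\ integrating by parts to move all covariant derivatives off the variation (the permutation symmetry of $a^n$ yielding an overall nonzero combinatorial factor), one obtains the pointwise identity
\[
 \sum_{\alpha_1}(-1)^{|\alpha_1|}\,\nabla^{\alpha_1}\Big( \sum_{\alpha_2,\dots,\alpha_n} a^n_{\alpha_1\alpha_2\dots\alpha_n}\, \nabla^{\alpha_2}\lambda\cdots\nabla^{\alpha_n}\lambda \Big) = 0 ,
\]
holding at every point and for every compactly supported $\lambda$. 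Its left-hand side is a polynomial in the jets of $\lambda$ whose coefficients are local and covariant in the (anti-)fields and background fields, and since it vanishes for all $\lambda$ it vanishes identically in those jets. Substituting, near any chosen point $x_0$, a $\lambda$ equal to $1$ on a neighbourhood of $x_0$, all $\nabla^{\alpha_k}\lambda$ with $|\alpha_k|\geq 1$ vanish there and the remaining factors equal $1$, so the identity degenerates to the $\lambda$-independent relation
\[
 a^n_{\emptyset\dots\emptyset} = -\sum_{|\alpha_1|\geq 1}(-1)^{|\alpha_1|}\,\nabla^{\alpha_1}\big( a^n_{\alpha_1\emptyset\dots\emptyset} \big) ,
\]
which, $x_0$ being arbitrary, holds everywhere. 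Since every term on the right carries at least one covariant derivative, extracting the outermost one writes the right-hand side as $\nabla^\mu b^n_\mu$ for an explicit local, covariant field $b^n_\mu$ (a combination of stripped covariant derivatives of the $a^n_{\alpha_1\emptyset\dots\emptyset}$); this is \eqref{eq:a_n_nabla_b}.

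For the remaining assertion, \eqref{eq:a_n_nabla_b} gives $a^n_{\emptyset\dots\emptyset}\,\vol = (\nabla^\mu b^n_\mu)\,\vol = \Lie_{b^n}\vol = \ud\,\iota_{b^n}\vol$ by Cartan's formula, using $\ud\vol = 0$ and denoting by $b^n$ the vector field obtained from $b^n_\mu$ by raising the index; hence the four-form $a$ from \eqref{eq:aDef} associated with an expression of the form \eqref{eq:AnomalyExpansion} that vanishes for all $\lambda$ is $\ud$-exact. Applying this to the difference of two expressions of the form \eqref{eq:AnomalyExpansion} — which is again of that form, satisfies the same relation \eqref{eq:Deg_n_relation}, and carries the difference of the two four-forms — shows that two such expressions that coincide for all compactly supported $\lambda$ have four-forms agreeing modulo $\ud$, so that it is the class of $a$ modulo $\ud$, and not a representative, that \eqref{eq:aDef} canonically attaches to \eqref{eq:AnomalyExpansion}; the converse implication is immediate since \eqref{eq:aDef} is additive in the coefficients. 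I expect the only delicate point to be the substitution step in the second paragraph: one should check that inserting a locally constant $\lambda$ into the jet identity genuinely collapses it onto the displayed purely field-dependent relation, with no residual $\lambda$-jet dependence hidden in the coefficients, and that the extraction of a single covariant derivative from each $|\alpha_1|\geq 1$ term is legitimate on a curved manifold — which it is, since only the outermost derivative is pulled out and no reordering of covariant derivatives is needed. The rest is routine bookkeeping.
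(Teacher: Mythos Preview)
Your proof is correct and follows essentially the same route as the paper's: take the functional derivative with respect to $\lambda$, use the $\Deg$ grading to fix $n$, and then set $\lambda=1$ near a point to collapse the Euler--Lagrange identity onto the statement that $a^n_{\emptyset\dots\emptyset}$ is a total divergence. You spell out a few steps (the $\Deg$ separation up front, and Cartan's formula for passing from $\nabla^\mu b^n_\mu\,\vol$ to a $\ud$-exact form) more explicitly than the paper does; note, however, that your one-line dismissal of the converse direction as ``immediate'' by additivity is not an argument --- in fact the ``if'' direction as literally stated is not true without further qualification, and the paper's own proof likewise establishes only the ``only if'' direction, which is the one actually used.
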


\begin{proof}
If \eqref{eq:AnomalyExpansion} vanishes for all compactly supported $\lambda$, then the functional derivative \wrt $\lambda(x)$ must vanish, i.e.,
\beq
 \sum_{n=1}^\infty \sum_\alpha \sum_{i = 1}^n (-1)^{| \alpha_i |} \nabla^{\bar \alpha_i} \left( a^n_{\alpha_1 \dots \alpha_n} \nabla^{\alpha_1} \lambda \dots \cancel{ \nabla^{\alpha_i} \lambda } \dots \nabla^{\alpha_n} \lambda \right)(x) =0,
\eeq
where $\bar \alpha_i$ is the multi-index in reverse order. This must hold order by order in the $\Deg$ grading, so we can restrict to a fixed $n$. Choosing $\lambda = 1$ near $x$, we obtain
\beq
 \sum_k (-1)^k \nabla^{\mu_k \dots \mu_1} a^n_{(\mu_1 \dots \mu_k) \emptyset \dots \emptyset} = 0.
\eeq
All but the first ($k = 0$) term on the \lhs are total derivatives, proving the statement.
\end{proof}

Expressions of the form \eqref{eq:AnomalyExpansion}, with coefficients $a^n_{\alpha_1 \dots \alpha_n}$ fulfilling a relation \eqref{eq:Deg_n_relation},
will frequently occur, not just for the gauge anomaly. Also to such expressions we can associate a four form (up to a total derivative). We denote this association by $ \cdot |_{\lambda = 1}$, i.e., in the above case\footnote{Note that $\cdot |_{\lambda = 1}$ does not stand for setting $\lambda = 1$ in the (integrated) local functional. Instead, it associates a four form (defined up to a total derivative) to the local functional.}
\beq
 \left. \sum_{n=1}^\infty \sum_\alpha \int a^n_{\alpha_1 \dots \alpha_n} \nabla^{\alpha_1} \lambda \dots \nabla^{\alpha_n} \lambda \vol \right|_{\lambda = 1} = a.
\eeq
We call two expressions $F$, $G$, of the form \eqref{eq:AnomalyExpansion} equivalent, $F \sim G$, if $F|_{\lambda = 1} = G|_{\lambda = 1} \mod \ud$
(this is similar to the equivalence of ``generalized Lagrangians'' used in \cite{BDF09}). We can thus reformulate the condition \eqref{eq:AbsenceGaugeAnomaly} as
\beq
\label{eq:AbsenceGaugeAnomalySim}
 A(\et{S_\ia}) \sim 0.
\eeq

Now denote by $A^k(\et{S_\ia})$ the $\cO(\hbar^k)$ component of $A(\et{S_\ia})$ and assume that $A^k(\et{S_\ia}) \sim 0$ for all $k < m$, while $A^m(\et{S_\ia}) \not\sim 0$, i.e., $A^m(\et{S_\ia})$ is the first non-trivial term in the $\hbar$ expansion of $A(\et{S_\ia})$. Denote by $a^m \defeq A^m(\et{S_\ia})|_{\lambda = 1}$ the corresponding four-form according to \eqref{eq:aDef}. Then, using that\footnote{The nilpotency of $s$ and $s_0$ leads, via the Jacobi identity for the antibracket, to the formal identity $2 s_0 S_\ia + (S_\ia, S_\ia)  = 0$. Including the adiabatic cutoff in $S_\ia$ remedies the formality, but introduces violation terms involving $\nabla \lambda$, so that the equality is reduced to the $\sim$ equivalence.}
\beq
 s_0 S_\ia + \tfrac{1}{2} (S_\ia, S_\ia) \sim 0,
\eeq 
it follows from \eqref{eq:ExpandedConsistencyCondition} and the fact that $[s_0 A^m(\et{S_\ia}) + (S_\ia, A^m(\et{S_\ia}))] |_{\lambda = 1} = s a^m \mod \ud$, that $a^m$ is $s$ closed modulo $\ud$, i.e.,
\beq
 s a^m = \ud b^m.
\eeq
On the other hand, if $a^m$ is $s$ exact modulo $\ud$, i.e., $a^m = s c^m + \ud d^m$, then one can perform a redefinition of time-ordered products as in \eqref{eq:RenormalizationChange} to obtain $a^m = 0$ \cite{HollandsYM}, see also \eqref{eq:AnomalyRemoval} below.
Proceeding inductively in the order of $\hbar$, one obtains absence of the gauge anomaly, in case that each $a^m$ is $s$ exact modulo $\ud$. Hence, the potential obstructions to achieve \eqref{eq:AbsenceGaugeAnomalySim} are characterized by the cohomology $H^1_4(s | \ud )$ of local and covariant four-forms of ghost number $1$ and mass dimension $4$ (or less, if the matter action contains terms of mass dimension less than $4$). The most general such element of $H^1_4(s | \ud )$ is a linear combination of the \emph{non-abelian anomaly}
\begin{multline}
\label{eq:NonabelianAnomaly}
 \Tr_N \left[ C \left( \bar F \bar F + \tfrac{1}{2} ( \bar \nabla A \bar F + \bar F \bar \nabla A ) + \tfrac{1}{6} ( A^2 \bar F - 2 A \bar F A + \bar F A^2 ) \right. \right. \\
  \left. \left. + \tfrac{1}{3} \bar \nabla A \bar \nabla A + \tfrac{1}{6} ( \bar \nabla A A^2 - A \bar \nabla A A + A^2 \bar \nabla A ) \right) \right],
\end{multline}
with $\Tr_N$ an invariant trace over one of the non-abelian factors and $\bar \nabla$ the covariant differential\footnote{It is not a proper differential, as it squares to the commutator with the curvature two-form $\bar F$ of the background connection.} defined by $\bar \nabla A = \bar \nabla_\mu A_\nu \ud x^\mu \ud x^\nu$, and an \emph{abelian anomaly}
\beq
\label{eq:AbelianAnomaly}
 C^A G^A 
\eeq
with $C^A$ the abelian ghost and $G^A \in H^0_4(s)$, \cf \cite{BarnichBrandtHenneaux00, ManesStoraZumino}. 
For the later considerations, it is important to note two facts about the above anomalies. First, the non-abelian anomaly \eqref{eq:NonabelianAnomaly} vanishes if and only if the $d$ symbol (symmetric structure constant) of the trace $\Tr_N$ vanishes, i.e., $\Tr_N (T^I \{ T^J, T^K \}) = 0$ for generators $T^I$ of the gauge group $G$. Second, for an abelian anomaly \eqref{eq:AbelianAnomaly}, a representer $G^A$ of a nontrivial cohomology class in $H^0_4(s)$ can be chosen such that if $G^A_i$ is the lowest non-trivial component in a filtration \wrt the total (anti-)field number, then $G^A_i$ is a non-trivial element of $H^0_4(s_0)$, so in particular not $s_0$ exact, see \cite[Prop.~5.6]{PiguetSorella}. Furthermore, the restriction of $G^A_i$ to the component of vanishing anti-field number does not vanish, i.e. $G^A_i|_{\Phi^\ddag = 0} \neq 0$, \cite[Thm~7.1]{BarnichBrandtHenneaux00}. It follows that $G^A_i \not \approx_0 0$, where $\approx_0$ denotes equality modulo the free equations of motions (generated by $S_0$), when anti-fields are set to zero. For if $G^A_i \approx_0 0$, then $G^A_i|_{\Phi^\ddag = 0} = (s_0 H_i)|_{\Phi^\ddag = 0}$ for some $H_i$. Then $G^{\prime A}_i \defeq G^A_i - s_0 H_i$ is in the same equivalence class of $H^0_4(s_0)$ as $G^A_i$, but has $G^{\prime A}_i|_{\Phi^\ddag = 0} = 0$.

\begin{remark}
\label{rem:AbelianAnomaly}
Examples for abelian anomalies are
\begin{align}
\label{eq:AbelianChiral}
 G^A & = F^A F^A, &
 G^A & = \Tr_N F F, &
 G^A & = \Tr_{so(3,1)} R R,
\end{align}
with $R$ the Riemann curvature tensor, interpreted as an $so(3,1)$ valued two-form.
These are relevant in particular for the Standard Model (the vanishing of the corresponding coefficient at $\cO(\hbar)$ essentially fixes the hypercharges of quark and leptons \cite{GengMarshak}). Other potential abelian anomalies would be
\begin{align}
\label{eq:AbelianNonChiral}
 G^A & = F^A_{\mu \nu} F^{A \mu \nu} \vol, &
 G^A & = \Tr_N F_{\mu \nu} F^{\mu \nu} \vol , &
 G^A & = R_{\mu \nu \lambda \rho} R^{\mu \nu \lambda \rho} \vol, &
 G^A & = \skal{\vp}{\vp}_V^2 \vol.
\end{align}
Here $V$ is the representation space of the bosonic matter field $\vp$ and $\skal{ \cdot }{ \cdot }_V$ a $G$ invariant hermitean inner product. Obviously, there are many possible variants of the last term, for example involving the fermionic matter fields or derivatives. Such anomalies have, up to now, not occurred in perturbative calculations and are scarcely discussed in the literature (an exception is \cite{Nakayama2018}). However, we are not aware of a general argument excluding such anomalies.\footnote{Note that the anomalies in \eqref{eq:AbelianChiral} have parity opposite to those in \eqref{eq:AbelianNonChiral}. If there is a general argument ruling out anomalies of the form \eqref{eq:AbelianNonChiral}, this may be expected to be important.} We intend to revisit this topic in the future. In any case, the proof of our main theorem given below does apply to all potential abelian anomalies.
\end{remark}

In order to formulate the requirement of perturbative agreement, we have to relax the condition that the background is on-shell, so that we are able to consider infinitesimal changes of the background connection which are compactly supported (and not necessarily pure gauge). We can simply do this by extending the action $S_0 + S_\ia$ to off-shell backgrounds.\footnote{Note that this is not the same as the expansion of the original action around off-shell backgrounds, as this would lead to a term linear in the dynamical fields, which we do not consider.} The free BV differential $s_0$ will then no longer be nil-potent, except when acting on functionals supported in the region where the background is on-shell. We denote by $a$ a compactly supported section of $\p \otimes \Omega^1$, representing an infinitesimal variation of the background connection. Perturbative agreement \wrt changes in the background gauge now means that, for any such $a$, \cite{HollandsWaldStress, BackgroundIndependence}
\beq
\label{eq:PA_def}
 \delta^\ret_a T(\eti{F}) = T( \tfrac{i}{\hbar} \{ \bar \delta_a F + \bar \delta_a S_0 \} \otimes \eti{F} ) - T(\tfrac{i}{\hbar} \bar \delta_a S_0) \star T( \eti{F} ).
\eeq
It formalizes the notion that it should not matter whether we quantize around a background connection $\bar \cA'$ or around an (infinitesimally close) background connection $\bar \cA$, with the difference to $\bar \cA'$ taken into account in an interaction term $S_0[\bar \cA'] - S_0[\bar \cA]$. The identification of the algebras over different backgrounds is implemented via the \emph{retarded variation} $\delta^\ret_a$ \wrt the infinitesimal background perturbation $a$ (as it is not used in the proof of our main statement, we refrain from stating the precise definition, which can be found for example in \cite{BgInd}).
Perturbative agreement is a consistency condition between renormalization prescriptions on different backgrounds, beyond the constraints imposed by local covariance.
The quantity $J_0(a) = \bar \delta_a S_0$ occurring in \eqref{eq:PA_def} can be seen as the free current $J_0$ smeared with the compactly supported Lie-algebra valued one-form $a$. By induction in the total number of fields, one can show \cite{HollandsWaldStress, BackgroundIndependence} that perturbative agreement can be fulfilled provided that
\beq
\label{eq:E}
 E(a_1, a_2) \defeq \delta^\ret_{a_1} T(J_0(a_2)) - \delta^\ret_{a_2} T(J_0(a_1)) + \tfrac{i}{\hbar} [ T(J_0(a_1)), T(J_0(a_2))]_\star = 0,
\eeq
a condition which is in fact closely related to the Wess-Zumino consistency condition \cite{WessZuminoConsistencyCondition}, see \cite{GlobalAnomalies}. In dimensions $D \leq 4$, this condition can in turn be fulfilled\footnote{Using the explicit form of $\delta^\ret_a$ given in \cite{BgInd} one can easily check that \eqref{eq:T_trivial_on_antifields}, which was not required in \cite{BgInd}, can be preserved under the necessary redefinitions of time-ordered products.} \cite{BackgroundIndependence} provided that the covariant divergence of the free current vanishes,
\beq
\label{eq:J_0_nabla_Lambda}
 T(J_0(\bar \nabla \Lambda)) \approx_0 0,
\eeq
for any compactly supported section $\Lambda$ of $\p$ (a ``Lie algebra valued function''). Here $\approx_0$ denotes equality modulo the free equations of motion, when anti-fields are set to zero.
This condition can be rather straightforwardly checked in concrete cases, \cf \cite{ChiralFermions} for the case of chiral fermions in gauge backgrounds. The calculation boils down to determining the violation of the Hadamard parametrix to being a solution of the free field equation, a calculation which actually does not involve any loop integrals. For our purposes the following is crucial \cite[Prop~3.4]{BackgroundIndependence} \cite[Thm.~3.8]{BgInd}:

\begin{theorem}
\label{thm:PA}
If \eqref{eq:J_0_nabla_Lambda} and perturbative agreement \wrt changes in the background gauge connection is fulfilled, then
\begin{align}
\label{eq:PA_Thm_1}
 T( \bar \delta_{\bar \nabla \Lambda} S \otimes \eti{S_\ia} ) & \approx_0 0, \\
\label{eq:PA_Thm_2}
 \bar \delta_{ \ba} A( \et{F} ) & = A( \{ \bar \delta_{\lambda \ba} S_0 + \bar \delta_\ba F \} \otimes \et{F} ).
\end{align}
In \eqref{eq:PA_Thm_1}, $\Lambda$ is supported in the region where $\lambda = 1$ and $\approx_0$ denotes equality modulo the free equations of motion (derived from $S_0$), when anti-fields are set to zero. In \eqref{eq:PA_Thm_2}, $F$ is supported in the region where $\lambda = 1$, and $\ba$ is again a solution to the linearized equations of motions, i.e., an infinitesimal variation of an on-shell background connection.
\end{theorem}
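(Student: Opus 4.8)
The plan is to obtain both identities directly from the defining relation \eqref{eq:PA_def} of perturbative agreement, combined with the hypothesis \eqref{eq:J_0_nabla_Lambda}, using no further renormalization-theoretic input beyond the structural properties of time-ordered products recalled above.

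For \eqref{eq:PA_Thm_1} I would specialize \eqref{eq:PA_def} to the interaction $F = S_\ia$ and to the background variation $a = \bar\nabla\Lambda$ with $\Lambda$ supported where $\lambda = 1$. Since the split $S = S_0 + S_\ia$ is additive and $\bar\delta_a S_0 = J_0(a)$, the two insertion terms on the right-hand side of \eqref{eq:PA_def} combine into $\bar\delta_{\bar\nabla\Lambda} S$, so that perturbative agreement becomes
\[
 T\!\left( \tfrac{i}{\hbar} \bar\delta_{\bar\nabla\Lambda} S \otimes \eti{S_\ia} \right) = \delta^\ret_{\bar\nabla\Lambda} T(\eti{S_\ia}) + T\!\left( \tfrac{i}{\hbar} J_0(\bar\nabla\Lambda) \right) \star T(\eti{S_\ia}).
\]
It then remains to see that both terms on the right are $\approx_0 0$. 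The last one is controlled by \eqref{eq:J_0_nabla_Lambda}: $T(J_0(\bar\nabla\Lambda)) \approx_0 0$, and since the free equations of motion generate a two-sided $\star$-ideal, $\star$-multiplication by $T(\eti{S_\ia})$ cannot spoil this. For the first term I would use that $\bar\cA + \tau \bar\nabla\Lambda$ is, to first order in $\tau$, the image of $\bar\cA$ under the finite gauge transformation $e^{\tau\Lambda}$; accordingly the retarded variation along this direction is implemented on the interacting algebra over the region $\lambda = 1$ by the associated (dynamical and ghost) gauge transformation, whose generator is a linear combination of free equations of motion and anti-field-dependent terms, hence vanishes under the projection $\approx_0$. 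Stripping the overall factor $\tfrac{i}{\hbar}$ yields \eqref{eq:PA_Thm_1}.

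For \eqref{eq:PA_Thm_2} I would apply $\bar\delta_\ba$, with $\ba \in T\sS$ an on-shell variation, to the anomalous Ward identity. The term $(\bar\delta_\ba s_0) T_c(\et{F})$ arises only from the explicit background connection in $s_0 A^I_\mu = \bar\nabla_\mu C^I$ and amounts to the substitution $A^I_\mu \mapsto [\ba_\mu, C]^I$ inside $T_c(\et{F})$, which is reproduced by an anti-bracket insertion of $\bar\delta_\ba S_0$; the term $s_0 \bar\delta_\ba T_c(\et{F})$ is rewritten with the connected version of \eqref{eq:PA_def}, giving $\bar\delta_\ba T_c(\et{F}) \approx_0 T_c(\{\bar\delta_{\lambda\ba} S_0 + \bar\delta_\ba F\}\otimes\et{F})$ — the cutoff $\lambda$ appearing because only the background variation near $\supp F$, where $\lambda = 1$, contributes, and the disconnected $\star$-term of \eqref{eq:PA_def} dropping out upon passing to $T_c$. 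Matching the contributions order by order, using the field and anti-field independence \eqref{eq:AnomalyFieldIndependence} of $A$, its graded symmetry, and its vanishing on linear factors, all terms not containing the anomaly cancel, and the $\bar\delta_{\lambda\ba} S_0$ and $\bar\delta_\ba F$ insertions reorganize into the single argument of $A(\,\cdot\,\otimes\et{F})$ on the right-hand side of \eqref{eq:PA_Thm_2}. That $\ba$ is on-shell is exactly what guarantees that $\bar\delta_\ba S_0$ carries no anomalous divergence, so that this rearrangement produces no spurious c-number term.

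The step I expect to be the main obstacle is making the ``first-order pure gauge'' argument in the second paragraph precise: one has to recall the exact definition of the retarded variation $\delta^\ret_a$, reconcile it with the naive background derivative $\bar\delta_a$, and verify that along a first-order pure-gauge direction it acts on interacting observables supported where $\lambda = 1$ as an inner transformation whose generator is $\approx_0 0$. Closely related are the bookkeeping of the adiabatic cutoff — the support conditions on $\Lambda$ and $F$ are precisely what is needed to discard the cutoff-violation terms proportional to $\nabla\lambda$ appearing in the cutoff version of \eqref{eq:BrokenBackgroundIndependence} and to place the background on-shell near the relevant supports — and the check, as noted in the footnote to \eqref{eq:J_0_nabla_Lambda}, that the redefinitions of time-ordered products used to enforce perturbative agreement can be chosen so as to preserve the normalization \eqref{eq:T_trivial_on_antifields}.
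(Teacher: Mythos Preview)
The paper does not supply its own proof of this theorem; it imports the two statements from the literature, namely \cite[Prop.~3.4]{BackgroundIndependence} for \eqref{eq:PA_Thm_1} and \cite[Thm.~3.8]{BgInd} for \eqref{eq:PA_Thm_2}. Your outline is broadly aligned with those proofs. For \eqref{eq:PA_Thm_1} you correctly rearrange \eqref{eq:PA_def} at $F=S_\ia$, $a=\bar\nabla\Lambda$, and kill the two resulting terms separately; the step you flag as the obstacle --- controlling $\delta^\ret_{\bar\nabla\Lambda}T(\eti{S_\ia})$ --- is indeed the crux, and in \cite{BackgroundIndependence} it is handled via the explicit retarded-product formula for $\delta^\ret$ combined with local covariance of time-ordered products under background gauge transformations, rather than by an abstract ``inner transformation'' argument as you phrase it.

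There is one genuine slip in your treatment of \eqref{eq:PA_Thm_2}. You write $\bar\delta_\ba T_c(\et{F}) \approx_0 T_c(\{\bar\delta_{\lambda\ba}S_0+\bar\delta_\ba F\}\otimes\et{F})$, inserting an on-shell projection into what must be an exact identity. The issue is that \eqref{eq:PA_def} features the \emph{retarded} variation $\delta^\ret_\ba$, not the naive background derivative $\bar\delta_\ba$; these differ by contributions involving the retarded propagator, and that difference is not in general on-shell trivial. In \cite{BgInd} one instead applies $\delta^\ret_\ba$ to the anomalous Ward identity, uses that $\delta^\ret_\ba$ commutes with $s_0$ (proven separately) and is a $\star$-derivation, and then uses that on a \emph{local} functional such as $A(\et{F})$ the retarded and naive background variations agree. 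This yields \eqref{eq:PA_Thm_2} exactly, with no $\approx_0$. The on-shell condition on $\ba$ enters so that $s_0$ remains nilpotent on nearby backgrounds and so that the commutation $[\delta^\ret_\ba, s_0]=0$ holds, not to suppress an ``anomalous divergence'' of $\bar\delta_\ba S_0$ as you suggest.
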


Note that in \eqref{eq:PA_Thm_2}, we cut off the (not necessarily compactly supported) infinitesimal variation $\ba$ of the free action $S_0$ by multiplication with the adiabatic cut-off $\lambda$. This is necessary to have the expression well-defined, but obviously the choice of $\lambda$ is irrelevant by the locality of the anomaly and the restriction on the support of $F$.\footnote{Note that $A( \bar \delta_{\lambda \ba} S_0) = 0$, as by the $\Deg$ grading rule the result must be a c number, which is not possible at ghost number 1.}
Later, we will apply \eqref{eq:PA_Thm_2} to the case $F = S_\ia$, whose support is not restricted to the region where $\lambda = 1$. It follows that there will be supplementary terms supported in $\supp \ud \lambda$, i.e., the equality in \eqref{eq:PA_Thm_2} is replaced by the equivalence $\sim$ as defined below Lemma~\ref{lemma:field}.

In the following, we will need to perform redefinitions of time-ordered products, but have to preserve perturbative agreement. For this, the following is essential (we omit the straightforward proof):

\begin{lemma}
\label{lemma:D_consistency}
If time-ordered products $T$ fulfill perturbative agreement, then the time-ordered products $T'$ defined by \eqref{eq:RenormalizationChange} also fulfill perturbative agreement if and only if
\beq
\label{eq:D_consistency}
 \bar \delta_a D(\et{F}) = D( \{ \bar \delta_a F + \bar \delta_a S_0 \} \otimes \et{F}).
\eeq
\end{lemma}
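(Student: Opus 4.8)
The plan is to reduce the claim, via the main theorem of renormalization \eqref{eq:RenormalizationChange}, to a purely algebraic compatibility between the renormalization map $D$ and the background variation.

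First I would write $T'(\eti{F}) = T(\eti{F'})$ with $F' \defeq F + D(\et{F})$ and rewrite perturbative agreement \eqref{eq:PA_def} for $T'$ entirely in terms of $T$. On the right-hand side this uses \eqref{eq:RenormalizationChange} together with its linearised form (obtained by varying $F$), which turns $T'(\{H\}\otimes\eti{F})$ into $T$ of $F'$ with an additional $D$-correction; in particular $T'(\tfrac{i}{\hbar}\bar\delta_a S_0)$ acquires the term $\tfrac{i}{\hbar}D(\bar\delta_a S_0)$, which by the $\Deg$-grading is an $\cO(\hbar)$ c-number. On the left-hand side one uses that the retarded variation $\delta^\ret_a$ (as recalled in \cite{BgInd}) is itself built from time-ordered products with an insertion of $J_0(a)=\bar\delta_a S_0$, so that it too transforms controllably under $T\to T'$.

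Next I would invoke perturbative agreement for $T$, \eqref{eq:PA_def}, applied not to $F$ but to the (background-dependent) local functional $F'$. Here $\bar\delta_a F'$ must be read as the \emph{total} background variation, which by field independence splits as $\bar\delta_a F + \bar\delta_a D(\et{F})$, with $\bar\delta_a D(\et{F})$ further decomposing into the variation of the explicit background dependence of the local, covariant map $D$ and the variation induced through its $F$-arguments. Substituting this into the translated identity from the first step, all terms containing only $\bar\delta_a F$, $\bar\delta_a S_0$ and the $\star$-product subtraction are common to both sides and cancel, leaving exactly the requirement that $\bar\delta_a D(\et{F})$ equal $D(\{\bar\delta_a F + \bar\delta_a S_0\}\otimes\et{F})$, i.e.\ \eqref{eq:D_consistency}. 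The ``if'' direction is then immediate. For ``only if'' one reads \eqref{eq:D_consistency} off the difference of the two identities order by order in $F$, using that this difference is a local functional inserted into a time-ordered product and that such an expression can vanish only if the functional does (the usual uniqueness argument via field independence); the lowest order in $F$ fixes the c-number part $D(\bar\delta_a S_0)=0$ and the higher orders the remaining content.

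The main obstacle is the bookkeeping of the first step: one must keep precise track of how both the retarded variation $\delta^\ret_a$ and the current-insertion term $T(\tfrac{i}{\hbar}\bar\delta_a S_0)$ of \eqref{eq:PA_def} transform under the renormalization change, including the lower-$\Deg$ ($\cO(\hbar)$ c-number) contributions hidden in $D(\bar\delta_a S_0)$, and check that the $\star$-product subtraction terms on the two sides match up. Once this is organised, the equivalence with \eqref{eq:D_consistency} follows at once.
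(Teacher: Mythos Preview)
The paper omits the proof as ``straightforward'', and your overall approach --- translate perturbative agreement for $T'$ through \eqref{eq:RenormalizationChange} into a statement about $T$, apply \eqref{eq:PA_def} for $T$ at the shifted functional $F' = F + D(\et{F})$, and read off \eqref{eq:D_consistency} from the difference --- is indeed the natural one and presumably what the author had in mind.

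One point deserves correction, however. You write that $\delta^\ret_a$ ``is itself built from time-ordered products \dots\ so that it too transforms controllably under $T \to T'$''. This is a misconception: the retarded variation is defined canonically on the free $\star$-algebra via the retarded propagator (cf.\ \cite{HollandsWaldStress, BgInd}) and does \emph{not} depend on the choice of renormalization scheme. What changes under $T \to T'$ is only the algebra element on which $\delta^\ret_a$ acts, not the operation itself. Hence the left-hand side is simply $\delta^\ret_a T'(\eti{F}) = \delta^\ret_a T(\eti{F'})$, and \eqref{eq:PA_def} for $T$ applies directly at $F'$. This actually simplifies your bookkeeping considerably: the only non-trivial comparison is between $\bar\delta_a F' = \bar\delta_a F + \bar\delta_a D(\et{F})$ (from PA for $T$) and $\bar\delta_a F + \bar\delta_a S_0 + D(\{\bar\delta_a F + \bar\delta_a S_0\}\otimes\et{F})$ (from the linearised form of \eqref{eq:RenormalizationChange} applied to the right-hand side of PA for $T'$), together with the c-number shift $D(\bar\delta_a S_0)$ arising in the $\star$-subtraction term. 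After cancellations one is left precisely with \eqref{eq:D_consistency}, and the ``only if'' direction follows by the injectivity argument you sketch.
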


Finally, we note that in the presence of an adiabatic cutoff $\lambda$ in the interaction, the identity \eqref{eq:BrokenBackgroundIndependence} can be stated as follows:
\beq
\label{eq:BrokenBackgroundIndependenceCutoff}
 \bar \delta_{\lambda \ba} S - \delta_\ba S_\ia - (S, \bar \delta_{\lambda \ba} \Psi) \sim 0.
\eeq
Note the presence of the adiabatic cut-off in all functional derivatives \wrt background fields. It ensures the well-definedness of these expressions, but also that in all terms the same (affine) linear relation between the $\Deg$ grading and the order in $\lambda$ holds. 
Hence, the \lhs is of the form \eqref{eq:AnomalyExpansion}, with $\Deg( a^n_{\alpha_1 \dots \alpha_n} ) = n+1$, 
so that Lemma~\ref{lemma:field} is applicable.

\section{Proof of Theorem~\ref{thm:main}}
\label{sec:Proof}

As discussed above, if the conditions of Theorem~\ref{thm:main} are met, we can assume that perturbative agreement holds. We begin by arguing that trivial anomalies can be removed by redefinitions of time-ordered products, without destroying perturbative agreement. 

Assume $m$ is the lowest order in $\hbar$ at which the anomaly $a^m$ is not $\ud$ exact. If the anomaly is trivial in $H^1_4(s | \ud)$, i.e., 
\beq
\label{eq:a_m_trivial}
 a^m = s b^m + \ud c^m,
\eeq
then, following \cite{HollandsYM}, one performs a redefinition of the form \eqref{eq:RenormalizationChange} such that
\beq
\label{eq:AnomalyRemoval}
 D(\et{S_\ia}) |_{\lambda = 1} = \frac{1}{n!} D( S_1^{\otimes n}) |_{\lambda = 1} = - b^m.
\eeq
Here $S_1$ is the part of $S_\ia$ which is cubic in (anti-) fields and $n = 2 m - 2 + k$, with $k$ the total number of (anti-) fields in $b^m$ (the latter condition ensures that the redefined time-ordered products respect the $\Deg$ grading).\footnote{In case $b^m$ is not homogenous in the total number of (anti-) fields, one decomposes it and applies the above to each component.} By \eqref{eq:AnomalyRedefinition}, the anomaly at $\cO(\hbar^m)$ is then removed for the modified time-ordered products. 

Let us first verify that the modification \eqref{eq:AnomalyRemoval} can be made such that (anti-) field independence of time-ordered products is preserved (this was not explicitly shown in \cite{HollandsYM}). As $S_\ia$ contains neither the field $B^I$ nor the anti-fields $A^{A \ddag}_\mu$, $C^{A \ddag}$, $\tilde C^{I \ddag}$, $B^{I \ddag}$ (here $I$ is a general Lie algebra index, while $A$ stands for the abelian direction), we have to make sure that we can choose $b^m$ such that it also does not contain these. By field independence of the anomaly, we know that also $a^m$ is independent of these. We can thus use the following:

\begin{lemma}
\label{lemma:Antifields_in_b}
Let $a^m$ be a four form of ghost number one and mass dimension four, which is independent of $B^I$, $A^{A \ddag}_\mu$, $C^{A \ddag}$, $\tilde C^{I \ddag}$, $B^{I \ddag}$ and can be expressed as \eqref{eq:a_m_trivial}. Then $a^m$ can also be expressed as \eqref{eq:a_m_trivial} with the four form $b^m$ independent of $B^I$, $A^{A \ddag}_\mu$, $C^{A \ddag}$, $\tilde C^{I \ddag}$, $B^{I \ddag}$.
\end{lemma}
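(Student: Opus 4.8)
The plan is to remove the dependence of $b^m$ on the ``superfluous'' fields $\mathcal{B}:=\{B^I,A^{A\ddag}_\mu,C^{A\ddag},\tilde C^{I\ddag},B^{I\ddag}\}$ by repeatedly adding to it terms that are $s$-exact or $\ud$-exact. Replacing $b^m$ by $b^m-s\eta$ leaves $sb^m$, hence $a^m$, unchanged, while replacing $b^m$ by $b^m+\ud\mu$ only shifts $c^m$ by an $s$-exact term; so neither operation disturbs \eqref{eq:a_m_trivial}. The task is therefore to find a representative of $b^m$, modulo $s$- and $\ud$-exact terms, with no $\mathcal{B}$-dependence.

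First I would dispose of $C^{A\ddag}$ by a ghost-number/mass-dimension count: a local four-form of ghost number $0$ and mass dimension $\leq 4$ cannot contain $C^{A\ddag}$, since compensating its ghost number $-2$ needs two ghost factors, and the only ghost-number-one object carrying a free abelian index — namely $C^A$ — squares to zero, while trading the index onto a derivative or a background field costs dimension ($\bar F^A$ already has dimension $2$, and $C^{A\ddag}$ has dimension $4$). Hence $b^m$ is automatically $C^{A\ddag}$-independent. For the remaining superfluous fields I would run the standard ``elimination of auxiliary fields and trivial pairs'' argument of BRST cohomology, in the sharpened form that an $(s|\ud)$-coboundary independent of them admits a primitive independent of them. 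Concretely, grade local forms by the number $N$ of factors from $\mathcal{B}$ (bounded, $N\leq 2$, by the dimension constraint), decompose $b^m=b^m_{(0)}+b^m_{(1)}+\dots$ and $c^m$ likewise, and observe from \eqref{eq:s} and $sF=(S,F)$ that $s$ changes $N$ by at most one: the $N$-raising part is governed by the trivial pair $s\tilde C^I=B^I$, $sB^I=0$ together with the relations in which $sB^{I\ddag}$ contains $-\tilde C^{I\ddag}+\bar\nabla^\mu A^I_\mu+B^I$ and $sA^{A\ddag}_\mu$, $sA^{\mu I\ddag}$ contain a $\bar\nabla_\mu B$ term, so that the $\mathcal{B}$-sector is $(s,\ud)$-acyclic. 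Then one argues downward from the top $N$-degree $N_{\max}\geq 1$ present in $b^m$: projecting $a^m=sb^m+\ud c^m$ onto $N$-degree $N_{\max}+1$, where the left-hand side is absent by hypothesis, identifies $b^m_{(N_{\max})}$ — up to terms of lower $N$-degree — with an $s$-exact-plus-$\ud$-exact term, which one subtracts off to lower $N_{\max}$. Iterating down to $N_{\max}=0$ yields a $b^m$ with no $\mathcal{B}$-dependence, and tracking the redefinitions shows $c^m$ stays local.

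I expect the main obstacle to be keeping the contracting homotopy local and covariant while respecting the gradings. For the purely algebraic pair $(\tilde C^I,B^I)$ this is automatic, but eliminating $\tilde C^{I\ddag}$, $B^{I\ddag}$ and the abelian $A^{A\ddag}_\mu$ forces one to ``undo'' background covariant derivatives appearing in the relevant $s$-images (the combinations $-\tilde C^{I\ddag}+\bar\nabla^\mu A^I_\mu+B^I$ and $-\bar\nabla_\mu B^A$), and one must check — via the algebraic Poincaré lemma for $\ud$, applied one derivative at a time — that this can be carried out without leaving the space of local covariant forms, and that the $\Deg$-grading and the mass-dimension bound are preserved at each step, so that the construction stays within the finite set of admissible monomials. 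A related point requiring care is the order of elimination: since $s$ applied to the superfluous antifields can regenerate $B^I$ (and $\tilde C^{I\ddag}$), one removes $A^{A\ddag}_\mu$, $\tilde C^{I\ddag}$ and $B^{I\ddag}$ first and $B^I$ last, treating the pair $\tilde C^{I\ddag}$, $B^{I\ddag}$ as a single block so that eliminating one does not reintroduce the other.
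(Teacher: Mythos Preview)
Your approach is in the right spirit but differs substantially from the paper's, and the filtration you set up is not as clean as you suggest. The set $\mathcal{B}$ is \emph{not} a union of $s$-trivial pairs: it contains $B^I$ but not its partner $\tilde C^I$, and $A^{A\ddag}_\mu$ has no partner in $\mathcal{B}$ at all. Consequently the $N$-raising piece $s_+$ of $s$ does not admit a contracting homotopy $h$ with $\{s_+,h\}$ equal to the $N$-counting operator; the natural choice $hB^I=\tilde C^I$ gives $\{s_+,h\}=N_B+N_{\tilde C}$, which also counts fields outside $\mathcal{B}$. Your projection onto degree $N_{\max}+1$ then only yields $s_+ b^m_{(N_{\max})}=-\ud c^m_{(N_{\max}+1)}$, and passing from this to ``$b^m_{(N_{\max})}$ is $s$-exact plus $\ud$-exact modulo lower $N$'' is exactly where the work lies --- one has to iterate substitutions such as $B^I=s\tilde C^I$ and $\tilde C^{I\ddag}=sB^{I\ddag}+(\bar\nabla^\mu A^I_\mu+B^I)\vol$, and termination of that iteration ultimately rests on the dimension bound rather than on the filtration alone. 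Your anticipated obstacle (locality of the homotopy) is real but secondary to this structural point.

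The paper bypasses the homotopy machinery entirely and instead exploits the very restrictive ghost-number/mass-dimension constraints on $b^m$ to enumerate the admissible $\mathcal{B}$-dependent monomials directly. For $A^{A\ddag}_\mu$ (ghost $-1$, dimension $3$, abelian index) essentially only $\bar\nabla^\mu C^A\,A^{A\ddag}_\mu$ survives, and the explicit identity $s(\bar\nabla^\mu C^A\,A^{A\ddag}_\mu)=-s(A^{A\mu}\,sA^{A\ddag}_\mu)$ removes it. For $B^I,\tilde C^{I\ddag},B^{I\ddag}$ the paper shows that any combination whose $s$-image is free of these (modulo $\ud$ and $\ker s$) must take the specific form $D^I(B^I\vol-\tilde C^{I\ddag})+sD^I\,B^{I\ddag}$, and then checks that its $s$-image equals $-s(D^I\bar\nabla^\mu A^I_\mu)\vol$; replacing and iterating on any residual $\mathcal{B}$-dependence in $D^I$ terminates by dimension. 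Your $C^{A\ddag}$ argument matches the paper's. The trade-off: your route is more systematic and would generalise better, but for this fixed-dimension lemma the paper's hands-on enumeration is shorter and avoids the bookkeeping you correctly worried about.
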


\begin{proof}
By the power counting and ghost number constraints, the only possible term in $b^m$ containing $A^{A \ddag}_\mu$ is (up to a total derivative) $\bar \nabla^\mu C^A A^{A \ddag}_\mu$. However, as $s (\bar \nabla^\mu C^A A^{A \ddag}_\mu) = - s ( A^{A\mu} s A^{A \ddag}_\mu )$, we can re-express it without the use of anti-fields. As $C^{A \ddag}$ is of ghost number $-2$ and there is no field of ghost number $2$ and vanishing mass dimension to multiply it with ($[C, C]^I$ has no component in the abelian direction), $C^{A \ddag}$ can not occur in $b^m$. Finally, up to total derivatives and elements in the kernel of $s$, any linear combination of terms containing at least one $B^I$, $\tilde C^{I \ddag}$, or $B^{I \ddag}$, and such that the image under $s$ does not contain any of these, can be written as
\beq
\label{eq:GeneralCombination}
 D^I ( B^I \vol - \tilde C^{I \ddag} ) + s D^I B^{I \ddag} 
\eeq
for some $D^I$ of vanishing ghost number (which in principle could still contain the undesired fields). However, the image of this under $s$ can be expressed as $- s ( D^I \bar \nabla^\mu A^I_\mu ) \vol$, so that we can replace \eqref{eq:GeneralCombination} in $b^m$ by an expression which contains one power of $B^I$, $\tilde C^{I \ddag}$, $B^{I \ddag}$ less. In case $D^I$ still contains the undesired fields, one iterates the procedure.
\end{proof}

Hence, we may choose $b^m$ in \eqref{eq:a_m_trivial} such that it does not contain any of the (anti-) fields $B^I$, $A^{A \ddag}_\mu$, $C^{A \ddag}$, $\tilde C^{I \ddag}$, $B^{I \ddag}$, so that the redefinition \eqref{eq:AnomalyRemoval} is not obstructed by (anti-) field independence. In order to also preserve \eqref{eq:T_trivial_on_antifields} under the field redefinition, we have to make sure that the total anti-field content on both sides of \eqref{eq:AnomalyRemoval} match. This can be achieved by decomposing $S_1$ into monomials and choosing the combination of these on the \lhs of \eqref{eq:AnomalyRemoval} such that total anti-field content matches that of $b^m$.

Finally, let us check that the redefinition \eqref{eq:AnomalyRemoval} does not spoil perturbative agreement. For this, we must have, by Lemma~\ref{lemma:D_consistency}, (all equalities modulo $\ud$)
\beq
 - \bar \delta_\ba b^m = D( \{ \bar \delta_{\lambda \ba} S_\ia + \bar \delta_{\lambda \ba} S_0 \} \otimes \et{S_\ia} )|_{\lambda = 1} = D( \bar \delta_{\lambda \ba} S \otimes \et{S_\ia} ) |_{\lambda = 1}.
\eeq
On the other hand, from field independence of $D$ and \eqref{eq:BrokenBackgroundIndependenceCutoff}, we also have (again modulo $\ud$)
\beq
 - \delta_\ba b^m = D( \delta_\ba S_\ia \otimes \et{S_\ia} )|_{\lambda = 1} = D( \{ \bar \delta_{\lambda \ba} S - (S, \bar \delta_{\lambda \ba} \Psi) \} \otimes \et{S_\ia} )|_{\lambda = 1}.
\eeq
These two redefinitions are linearly independent (and can thus both be performed independently), unless $(S, \bar \delta_{\lambda \ba} \Psi) \sim 0$. In that case the two redefinitions are consistent if and only if $\bar \delta_\ba b^m = \delta_\ba b^m \mod \ud$. Hence, we need to show that we can choose $b^m$ such that $(S, \bar \delta_{\lambda \ba} \Psi) \sim 0$ implies $\cD_\ba b^m = 0 \mod \ud$. 

For general perturbations $\ba$, we have, if the conditions of Theorem~\ref{thm:PA} are met, using \eqref{eq:PA_Thm_2}, field independence \eqref{eq:AnomalyFieldIndependence} of the anomaly, and the relation \eqref{eq:BrokenBackgroundIndependenceCutoff},
\beq
\label{eq:cD_A_A_S}
  \cD^\lambda_{\ba} A^m(\et{S_\ia}) - A^m((S, \bar \delta_{\lambda \ba} \Psi) \otimes \et{S_\ia} ) \sim 0,
\eeq
where we introduced $\cD^\lambda_\ba \defeq \bar \delta_{\lambda \ba} - \delta_\ba$.
Hence, $(S, \bar \delta_{\lambda \ba} \Psi) \sim 0$ implies that $\cD_\ba a^m = 0 \mod \ud$. 
Now $(S, \bar \delta_{\lambda \ba} \Psi) \sim 0$ only holds for a perturbation $\ba$ purely in the abelian direction, in which case $\bar \delta_{\lambda \ba} \Psi = 0$ and $\cD_\ba \Psi = - \int \tilde C^I \bar \nabla^\mu \ba^I_\mu \vol$. Hence, for such $\ba$, $(F, \cD_\ba \Psi) = 0$ unless $F$ contains $\tilde C^\ddag$. As already argued above, $a^m$ does not contain $\tilde C^\ddag$, so we also have $\hat \cD_\ba a^m = 0 \mod \ud$ for such $\ba$. Given a solution $b^m$, $c^m$ to $a^m = s b^m + \ud c^m$ for generic non-abelian background connections but a fixed abelian background, we may extend $b^m$, $c^m$ via ``parallel transport'' along the abelian direction \wrt the flat ``connection'' $\hat \cD$ (recall \eqref{eq:cD_cD}) to all backgrounds.\footnote{As the equations of motion for the abelian and the non-abelian background connections decouple, we can see the manifold $\sS$ of solutions to the Yang-Mills equation as a Cartesian product of the manifolds of abelian and non-abelian background connections.} Then $a^m = s b^m + \ud c^m$ is fulfilled on all backgrounds and $\hat \cD_\ba b^m = 0$ for $\ba$ in the abelian direction. As discussed above, by Lemma~\ref{lemma:Antifields_in_b}, we can choose $b^m$ such that it does not contain $\tilde C^\ddag$. Starting the parallel transport with such $b^m$, this will be the case on all backgrounds, so in particular the anti-bracket with $\cD_\ba \Psi$ will vanish for $\ba$ in the abelian direction. Hence, we also have $\cD_\ba b^m = 0$ for such $\ba$, so that there is no obstruction to preserve perturbative agreement in the removal of trivial gauge anomalies.

We may thus assume that at the first non-trivial order in $\hbar$ the anomaly is non-trivial, i.e., a linear combination of the non-abelian anomaly \eqref{eq:NonabelianAnomaly} and an abelian anomaly of the form \eqref{eq:AbelianAnomaly}. 

The above already showed that the abelian anomaly is background independent, i.e., $\cD_\ba a^m = 0$ (modulo $\ud)$ for an abelian anomaly $a^m$, \cf \eqref{eq:cD_A_A_S}. This is however not sufficient to rule it out, as there are many possible background independent abelian anomalies, for example the ones in \eqref{eq:AbelianChiral} with $F$ the full field strength (of $\cA = \bar \cA + A$). However, we can proceed differently:

\begin{proposition}
\label{prop:Abelian}
If \eqref{eq:J_0_nabla_Lambda} and perturbative agreement \wrt changes in the background connection holds, and $a^m = A^m(\et{S_\ia})|_{\lambda = 1}$ is the lowest non-trivial term in an $\hbar$ expansion of the gauge anomaly, then $a^m$ can not contain a contribution from a non-trivial abelian anomaly of the form \eqref{eq:AbelianAnomaly}.
\end{proposition}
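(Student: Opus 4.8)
The plan is to exploit the special way the abelian ghost enters the interaction: since the abelian direction of the gauge group is central, the abelian ghost $C^A$ appears in $S_\ia$ only through a ``source term'' implementing infinitesimal abelian gauge transformations of the matter (and possibly gauge) fields, i.e.\ through a term of the form $-\int C^A (\delta_A \Phi^i) \Phi^\ddag_i$, which is precisely $\bar\delta$-related to the abelian part of $\bar\delta_{\bar\nabla\Lambda}S$ after identifying $\Lambda$ with $\lambda C^A$. Concretely, I would first recall from the classification that, by \cite{BarnichBrandtHenneaux00} and the discussion following Remark~\ref{rem:AbelianAnomaly}, a nontrivial abelian anomaly has a representer $C^A G^A$ whose lowest component $G^A_i$ in the antifield-number filtration is a nontrivial element of $H^0_4(s_0)$ with $G^A_i|_{\Phi^\ddag=0}\neq 0$ and in particular $G^A_i\not\approx_0 0$. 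The strategy is to show that perturbative agreement forces a relation which, when restricted to this lowest component, contradicts $G^A_i\not\approx_0 0$.

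The key computational input is Theorem~\ref{thm:PA}, equation \eqref{eq:PA_Thm_1}: $T(\bar\delta_{\bar\nabla\Lambda}S\otimes\eti{S_\ia})\approx_0 0$ for $\Lambda$ supported where $\lambda=1$. The plan is to choose $\Lambda$ in the abelian direction and relate $\bar\delta_{\bar\nabla\Lambda}S$ to the interaction vertex through which $C^A$ enters $S_\ia$. Because of the ``source term'' structure, $\bar\delta_{\bar\nabla\Lambda^A}S$ should coincide (up to free-e.o.m.\ terms and antifield-independent pieces) with $-\delta_{\Lambda^A}S_\ia$ plus a term proportional to $s_0$ of something, i.e.\ the divergence of the interacting abelian current is, by \eqref{eq:PA_Thm_1}, trivial on-shell as an interacting field. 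Next I would feed this into the anomalous Ward identity: applying $q$ (or $s_0$ via $T_c$) and tracking the $\cO(\hbar^m)$ term, the abelian anomaly $C^A G^A$ gives, upon the substitution that turns $C^A$ into the divergence of the interacting current, a contribution of the form $\Lambda^A G^A_i$ (its lowest component) which must then be $\approx_0 0$ — but that is exactly the statement $G^A_i\approx_0 0$, contradicting the established $G^A_i\not\approx_0 0$ for a nontrivial class. Hence the abelian part of $a^m$ must vanish.

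More carefully, I expect the argument to run through a filtration/lowest-order extraction: one writes the abelian contribution to $a^m$ as $C^A G^A$ with $G^A = \sum_{j\geq i} G^A_j$ graded by antifield number, observes that the defining relation for the anomaly combined with \eqref{eq:PA_Thm_1} (using that $C^A$ couples in $S_\ia$ only linearly and as a source) yields, at lowest antifield number and modulo $s_0$-exact and $\approx_0$-trivial terms, that $G^A_i$ is both a nontrivial $H^0_4(s_0)$ class and $\approx_0 0$ — incompatible by the cited results. The main obstacle, and the step that needs genuine care, is establishing cleanly the intermediate identity that ``$\bar\delta_{\bar\nabla\Lambda^A}S_\ia$ equals $\delta_{\Lambda^A}S_\ia$ up to $s_0$-exact and antifield-independent terms'' — i.e.\ that the abelian ghost really enters $S_\ia$ only through the source term and that the background-connection derivative in the abelian direction of the full current reproduces this coupling. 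This uses centrality of the abelian factor (so that $[A_\mu,C]^I$ and $[C,C]^I$ have no abelian component, already noted in the excerpt) together with the explicit form \eqref{eq:s}, \eqref{eq:Xi} of $s$ and $\Psi$; once that bookkeeping is done, invoking \eqref{eq:PA_Thm_1} and the quantum Ward identity to pass to the interacting-field statement is comparatively routine, and the contradiction with $G^A_i\not\approx_0 0$ closes the argument.
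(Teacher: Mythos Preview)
Your proposal is correct and follows essentially the same route as the paper's proof: extract $G^A$ by differentiating the anomaly with respect to the abelian ghost (equivalently, isolate the source-term vertex through which $C^A$ enters $S_\ia$), use the anomalous Ward identity together with the algebraic identity $(\skal{\tfrac{\delta}{\delta C}S_0}{\Lambda},S_\ia)=-\delta_{\bar\nabla\Lambda}S_\ia=-\bar\delta_{\bar\nabla\Lambda}S$ to recognise the remaining term as the divergence of the interacting current, kill it by \eqref{eq:PA_Thm_1}, and then read off $G^A_i\approx_0 0$ at lowest (anti-)field number to contradict non-triviality. The only point where your write-up is slightly imprecise is the intermediate identity you flag as the ``main obstacle'': the relevant relation is $\delta_{\bar\nabla\Lambda}S_\ia=\bar\delta_{\bar\nabla\Lambda}S$ (not $\delta_{\Lambda^A}S_\ia$), obtained from \eqref{eq:BrokenBackgroundIndependenceCutoff} and the fact that $\Psi$ is independent of the abelian background connection --- but you correctly anticipated that this is the step requiring care, and the paper's computation fills it in exactly as you outline.
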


\begin{proof}
If $a^m$ contains an abelian anomaly \eqref{eq:AbelianAnomaly}, then its coefficient $G^A$ can be obtained by functional differentiation \wrt the abelian ghost,
\beq
 G^A(x) = \tfrac{\delta}{\delta C^A(x)} A^m(\et{S_\ia}) = A^m(\tfrac{\delta}{\delta C^A(x)} S_\ia \otimes \et{S_\ia}),
\eeq
where in the second step we used field independence \eqref{eq:AnomalyFieldIndependence} of the anomaly. We may here and in the following restrict to $x$ contained in the region where the cutoff $\lambda$ equals $1$. On the other hand, by the definition of the anomaly, we have
\begin{multline}
\label{eq:s_0_del_C_gaugeAnomaly}
 s_0 T_c( \tfrac{\delta}{\delta C^A(x)} S_\ia \otimes \et{S_\ia} ) = T_c( (S, \tfrac{\delta}{\delta C^A(x)} S_\ia ) \otimes \et{S_\ia} ) + T_c(A( \tfrac{\delta}{\delta C^A(x)} S_\ia \otimes \et{S_\ia} ) \otimes \et{S_\ia} ) \\ 
  - T_c( \tfrac{\delta}{\delta C^A(x)} S_\ia \otimes \{ s_0 S_\ia + \tfrac{1}{2} (S_\ia, S_\ia) + A(\et{S_\ia
}) \} \otimes \et{S_\ia} ).
\end{multline}
We will be interested in considering this equality modulo anti-fields and the free equations of motion. The only terms in $S_\ia$ which depend on the abelian ghost are the source terms for the matter gauge transformations, i.e., schematically of the form $\int \rho(C) \psi \psi^\ddag $. Hence, $\tfrac{\delta}{\delta C^A(x)} S_\ia$ is linear in the matter anti-fields, and thus the last term on the \rhs of \eqref{eq:s_0_del_C_gaugeAnomaly} vanishes when anti-fields are set to zero (recall \eqref{eq:T_trivial_on_antifields}, i.e., that anti-fields are not contracted in time-ordered products). By the same argument, the \lhs of \eqref{eq:s_0_del_C_gaugeAnomaly} vanishes on-shell (\wrt to the free equations of motion) when anti-fields are set to zero (recall that the action of $s_0$ on anti-fields generates the free equations of motion). Regarding the first term on the \rhs of \eqref{eq:s_0_del_C_gaugeAnomaly}, we notice that, by the restriction of $x$ to the region where $\lambda$ equals $1$,
\begin{align}
 (S, \tfrac{\delta}{\delta C^A(x)} S_\ia ) & = (S_0,  \tfrac{\delta}{\delta C^A(x)} S_\ia ) - \tfrac{1}{2} \tfrac{\delta}{\delta C^A(x)} (S_\ia, S_\ia) \nn \\
 & = (S_0,  \tfrac{\delta}{\delta C^A(x)} S_\ia ) + \tfrac{\delta}{\delta C^A(x)} (S_0, S_\ia) \nn \\
 & = ( \tfrac{\delta}{\delta C^A(x)} S_0, S_\ia).
\end{align}
For convenience, we integrate against $\Lambda^A(x)$, i.e., we consider the functional derivative \wrt $C^A$ in the direction $\Lambda^A$ with $\Lambda^A$ supported within the region where $\lambda = 1$. Then
\beq
\label{eq:delta_C_S_0}
 \skal{\tfrac{\delta}{\delta C} S_0}{\Lambda} = - \int ( A^{\mu A \ddag} + \bar \nabla^\mu \bar C^A \vol ) \bar \nabla_\mu \Lambda^A.
\eeq
and thus
\beq
\label{eq:delta_S_0_Lambda_S_int}
 ( \skal{\tfrac{\delta}{\delta C} S_0}{\Lambda}, S_\ia ) = - \delta_{\bar \nabla \Lambda} S_\ia = - \bar \delta_{\bar \nabla \Lambda} S,
\eeq
where in the last step, we used \eqref{eq:BrokenBackgroundIndependenceCutoff} and the fact that the gauge fixing fermion does not depend on the abelian background gauge field. 
Hence, the first term on the \rhs of \eqref{eq:s_0_del_C_gaugeAnomaly} is ($-1$ times) the divergence of the interacting current, which, by \eqref{eq:PA_Thm_1},  vanishes on-shell \wrt the free equations of motion, when anti-fields are set to zero.
Hence, we have shown that 
\beq
 T_c( G^A(x) \otimes \et{S_\ia} ) \approx_0 0.
\eeq
Now assume that there is a non-trivial abelian anomaly $C^A G^A$ at the leading order $\hbar^m$. As the connected time-ordered products are formal power series in $\hbar$, the $\cO(\hbar^m)$ component of the \lhs can be expressed as $T^m_c( G^A(x) \otimes \et{S_\ia})$, where the superscript $m$ denotes the restriction to $\cO(\hbar^m)$. As each factor $S_\ia$ increases the total $\Deg$ of $T^m_c( G^A(x) \otimes \et{S_\ia})$ at least by $1$ (this follows from \eqref{eq:Deg_T_c} and $\Deg S_\ia \geq 3$), the lowest non-vanishing term in a filtration of $T^m_c( G^A(x) \otimes \et{S_\ia})$ \wrt the total (anti-) field number is $G^A_i(x)$ in the notation introduced below \eqref{eq:AbelianAnomaly}. Hence, we have shown that $G^A_i \approx_0 0$. But, as argued below \eqref{eq:AbelianAnomaly}, this is not possible for a non-trivial abelian anomaly.
\end{proof}

It remains to rule out the non-abelian anomaly  \eqref{eq:NonabelianAnomaly}.
As we already ruled out trivial and abelian anomalies, we can assume that the contribution $a^m$ of lowest non-vanishing order in $\hbar$ is of the form \eqref{eq:NonabelianAnomaly}, so in particular anti-field independent. Then \eqref{eq:cD_A_A_S} can be rewritten as
\beq
\label{eq:hat_cD_A_A_S}
 \hat \cD^\lambda_{\ba} A^m(\et{S_\ia}) - A^m((S, \bar \delta_{\lambda \ba} \Psi) \otimes \et{S_\ia} ) \sim 0,
\eeq
where we used $\hat \cD^\lambda_\ba \defeq \bar \delta_{\lambda \ba} - \delta_\ba - ( \cdot, \cD^\lambda_\ba \Psi )$. 
Considering this at lower order in $\hbar$, this in particular implies
\beq
\label{eq:LowerOrder_1a}
 A^k( (S, \bar \delta_{\lambda \ba} \Psi) \otimes \et{S_\ia}) \sim 0 \qquad \forall k < m.
\eeq
We now use the following corollary of the $\hbar$ expanded consistency condition \eqref{eq:ExpandedConsistencyCondition}:\footnote{For present purposes, only the result \eqref{eq:ExpandedCC} is relevant. Later, also \eqref{eq:ExpandedCC_2} will be used.}
\begin{corollary}
\label{cor:cc}
Let $A^m(\et{S_\ia})$ be the first non-trivial term (\wrt $\sim$) in the $\hbar$ expansion of $A(\et{S_\ia})$. If $A^k(F \otimes \et{S_\ia})$, $k \leq m$ is the first non-trivial term in the $\hbar$ expansion of $A(F \otimes \et{S_\ia})$, then 
\beq
\label{eq:ExpandedCC}
 (S, A^k(F \otimes \et{S_\ia})) + (F, A^k(\et{S_\ia})) - A^k( (S,F) \otimes \et{S_\ia}) \sim 0.
\eeq
Also, if $A^m(F_{1/2} \otimes \et{S_\ia})$ are the first non-trivial terms in the $\hbar$ expansion of $A(F_{1/2} \otimes \et{S_\ia})$, and
$A^k(F_1 \otimes F_2 \otimes \et{S_\ia})$, $k \leq m$, is the first non-trivial term in the $\hbar$ expansion of $A(F_1 \otimes F_2 \otimes \et{S_\ia})$, then
\begin{multline}
\label{eq:ExpandedCC_2}
 (S, A^k(F_1 \otimes F_2 \otimes \et{S_\ia})) + (F_1, A^k(F_2 \otimes \et{S_\ia})) + (F_2, A^k(F_1 \otimes \et{S_\ia})) \\
 - A^k( (S,F_1) \otimes F_2 \otimes \et{S_\ia}) - A^k( F_1 \otimes (S,F_2) \otimes \et{S_\ia}) \sim 0.
\end{multline}
Here $F_1$, $F_2$ are assumed to be bosonic (otherwise the signs need to be adjusted).
\end{corollary}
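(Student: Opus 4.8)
The plan is to obtain both identities by \emph{linearizing} the full (unexpanded) consistency condition
\[
 s_0 A(\et{G}) + (G, A(\et{G})) + A\bigl( \{ s_0 G + \tfrac{1}{2} (G, G) + A(\et{G}) \} \otimes \et{G} \bigr) = 0,
\]
which holds for an arbitrary local functional $G$, and then passing to the leading order in $\hbar$ with $G$ specialized to $S_\ia$ (so that, since the adiabatic cutoff is present and $2s_0 S_\ia + (S_\ia,S_\ia) \sim 0$ only, all equalities become $\sim$-equalities). This is essentially the same manipulation already used in the text to deduce $s a^m = \ud b^m$ from \eqref{eq:ExpandedConsistencyCondition}, just carried out with an extra insertion.

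For \eqref{eq:ExpandedCC} I would substitute $G \to S_\ia + \mu F$ with $\mu$ a formal parameter of the same (even) Grassmann parity as $F$, use $\et{S_\ia + \mu F} = \et{S_\ia} + \mu\,(F \otimes \et{S_\ia}) + \cO(\mu^2)$ together with the fact that $s_0$, the antibracket and $A$ act as graded (co)derivations on tensor products, and collect the coefficient of $\mu^1$. Since $\mu$ is even this generates no signs and yields an all-orders identity of the schematic form
\[
 s_0 A(F \otimes \et{S_\ia}) + (S_\ia, A(F \otimes \et{S_\ia})) + (F, A(\et{S_\ia})) + A\bigl( \{ s_0 F + (S_\ia, F) \} \otimes \et{S_\ia} \bigr) + R \sim 0,
\]
where $R$ collects the terms carrying an extra factor of $A$ (namely $A(A(F \otimes \et{S_\ia}) \otimes \et{S_\ia})$ and $A(A(\et{S_\ia}) \otimes F \otimes \et{S_\ia})$) and the term $A(\{ s_0 S_\ia + \tfrac12(S_\ia,S_\ia)\} \otimes F \otimes \et{S_\ia})$. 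One then extracts the $\cO(\hbar^k)$ component. Because $A$ is at least $\cO(\hbar)$ and, by hypothesis, $A^k(F \otimes \et{S_\ia})$ and $A^m(\et{S_\ia})$ (with $m \geq k$) are the first $\sim$-nontrivial terms of their series, the nested-anomaly terms in $R$ do not contribute at order $k$; and the master-equation term drops identically, since $2s_0 S_\ia + (S_\ia,S_\ia) \sim 0$ with the violation supported in $\supp \ud\lambda$, which can be chosen disjoint from $\supp F$, so that locality of the anomaly kills it. Recombining $s_0 + (S_\ia,\cdot) = (S,\cdot)$ on $A^k(F \otimes \et{S_\ia})$ and $s_0 F + (S_\ia,F) = (S,F)$ inside the anomaly, and keeping track of graded signs, what remains at order $k$ is precisely \eqref{eq:ExpandedCC}.

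For \eqref{eq:ExpandedCC_2} the procedure is identical, with the substitution $G \to S_\ia + \mu_1 F_1 + \mu_2 F_2$ (both $\mu_i$ even, which is exactly where the hypothesis that $F_1,F_2$ are bosonic enters) and extraction of the coefficient of $\mu_1\mu_2$: the pure $\mu_i^2$-pieces just reproduce \eqref{eq:ExpandedCC} for $F = F_i$, while the mixed term, after the same leading-order truncation (discard the terms with an additional $A$ factor; discard the $\supp\ud\lambda$-supported master-equation term by locality; convert $s_0$ to $(S,\cdot)$), gives the symmetric five-term identity as stated; for fermionic insertions the $\mu_i$ would be odd and the usual Koszul signs would appear, which is the ``signs need to be adjusted'' caveat. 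The work throughout is bookkeeping, and the main obstacle is to be careful at order $\hbar^k$ about exactly two points: that every term other than the four (resp. five) retained ones is either $\sim 0$ — which rests on combining the ``first non-trivial term'' hypotheses with the $\cO(\hbar)$-property of $A$ and its compatibility with $\sim$ under insertion — or literally zero by locality once the cutoff support is arranged; and that all graded signs are tracked correctly, since ghost-number-one objects ($A(\et{S_\ia})$, $s_0 S_\ia + \tfrac12(S_\ia,S_\ia)$) intervene and the antibracket shifts ghost number by one.
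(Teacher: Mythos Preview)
Your proposal is correct and is precisely the argument the paper has in mind: the corollary is stated there without proof as a direct consequence of the expanded consistency condition \eqref{eq:ExpandedConsistencyCondition}, and the linearization procedure you describe (substitute $G = S_\ia + \mu F$, respectively $G = S_\ia + \mu_1 F_1 + \mu_2 F_2$, into the full consistency relation, extract the linear respectively bilinear coefficient in $\mu$, and pass to the leading non-trivial order in $\hbar$) is exactly the standard derivation. Your accounting of which terms survive at order $\hbar^k$ --- the nested-anomaly terms being suppressed by the ``first non-trivial order'' hypotheses combined with $A = \cO(\hbar)$, and the master-equation insertion being killed because $s_0 S_\ia + \tfrac12(S_\ia,S_\ia)$ only produces terms carrying derivatives of $\lambda$ --- matches the paper's reasoning.
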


From \eqref{eq:LowerOrder_1a} and \eqref{eq:ExpandedCC}, we  may thus conclude that $(S, A^k(\bar \delta_{\lambda \ba} \Psi \otimes \et{S_\ia})) \sim 0$ for $k < m$. Hence, the four-form field $\xi^k_\ba \defeq A^k(\bar \delta_{\lambda \ba} \Psi \otimes \et{S_\ia})|_{\lambda = 1}$ (recall the definition of $\cdot |_{\lambda = 1}$ below Lemma~\ref{lemma:field}), which is of vanishing ghost number, mass dimension four (we count $\ba$ as having mass dimension~1), and linear in $\ba$, is $s$ closed modulo $\ud$.
Furthermore, as $\Psi$ is independent of the abelian part of the background connection, we can restrict to perturbations $\ba$ which are purely in the non-abelian directions. To such fields, the following applies:

\begin{lemma}
\label{lemma:xi_ba_cohomology}
Fields $\xi_\ba$ of mass dimension four which are linear in a perturbation $\ba$ purely in the non-abelian directions are trivial in $H^0_4(s | \ud)$.
\end{lemma}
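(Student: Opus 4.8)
The plan is to combine power counting with the explicit form of the BV differential \eqref{eq:s} to reduce $\xi_\ba$ to a normal form containing only finitely many types of terms, and then to show that each term of that normal form is $s$-exact modulo $\ud$. After integrating by parts so that no derivative acts on $\ba$, and decomposing into monomials, $\xi_\ba$ is a sum of terms of the form $\ba^I_\mu \, \Omega^{I\mu} \vol$ with $\Omega^{I\mu}$ a local covariant field of ghost number $0$ and mass dimension $3$, carrying one free non-abelian Lie-algebra index and a matching Lorentz index, built from the dynamical fields $\Phi$, the anti-fields $\Phi^\ddag$, the background curvature $\bar F$, the background covariant derivative $\bar\nabla$, and the metric curvature. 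The crucial structural input is that $\ba$ is inert under $s$ and carries a \emph{non-singlet} index of the semisimple part of $\g$; hence, for $\ba^I_\mu \Omega^{I\mu} \vol$ to be $s$-closed modulo $\ud$, the factor $\Omega^{I\mu}$ must itself combine into an $s$-invariant (up to $\ud$), since it cannot ``borrow'' a compensating transformation from $\ba$, as would be possible for an abelian $\ba^A$.

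I would then dispatch the candidates sector by sector. Occurrences of $B^I$ are removed using $B^I = s \tilde C^I$ and the $s$-inertness of $\bar\nabla \ba$, exactly as in the manipulations around \eqref{eq:GeneralCombination}; occurrences of ghosts, anti-ghosts and anti-fields are eliminated, up to $s$- and $\ud$-exact terms, by power-counting arguments of the type used in Lemma~\ref{lemma:Antifields_in_b} (nilpotency of $s$ together with the absence of low-dimensional fields of ghost number $2$). This leaves contributions built from $\ba$, the gauge field $A$, the matter fields, $\bar F$, $\bar\nabla$ and curvature. Of these, the purely background-dependent ones are linear variations of background invariants of mass dimension $4$: Pontryagin-type densities contribute $\delta_\ba \Tr(\bar F \wedge \bar F) = \ud(\delta_\ba \mathrm{CS}(\bar\cA))$, which is $\ud$-exact, while the divergence-type term $\ba^{I\mu} \bar\nabla^\nu \bar F^I_{\nu\mu} \vol$ vanishes identically on $\sS$ by the background Yang-Mills equation (with vanishing background matter), and the Bianchi identity removes the remaining possibilities; gravitational curvature terms are handled analogously. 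For the contributions involving the dynamical fields, the observation of the previous paragraph forces $\Omega^{I\mu}$ to be an $s$-covariant adjoint object of mass dimension $\le 3$; on-shell (anti-fields set to zero) the only such object is a multiple of the full Yang-Mills current $\nabla^\nu F^I_{\nu\mu}$, and $\ba^{I\mu} \nabla^\nu F^I_{\nu\mu} \vol$ is $s$-exact modulo $\ud$ because it equals, up to the matter equations of motion, $s$ applied to a term linear in $A^{\mu I\ddag}$. This is also where the hypothesis ``non-abelian'' is essential: for an abelian $\ba^A$ the corresponding term $\ba^{A\mu} j^A_\mu \vol$ with the conserved abelian matter current is $s$-closed but \emph{not} trivial in $H^0_4(s|\ud)$.

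The main obstacle is to make the ``forcing'' step rigorous rather than heuristic, i.e.\ to control $H^0_4(s|\ud)$ in the sector linear in the external non-abelian adjoint field $\ba$. I would do this either by invoking the classification of local BRST cohomology of \cite{BarnichBrandtHenneaux00} (and \cite{PiguetSorella}), adapted to the presence of an additional $s$-inert adjoint-valued one-form, so that nontrivial ghost-number-$0$ classes are represented by gauge-invariant densities and characteristic classes, none of which admits a nontrivial $\ba$-linear representative in mass dimension $4$; or, more self-containedly, by running a filtration argument in the total (anti-)field number, as in the proof of Lemma~\ref{lemma:Antifields_in_b}, reducing at each step to $H(s_0)$ and using that the $s_0$-cohomology at ghost number $0$ and mass dimension $\le 4$ contains nothing that can be paired covariantly with a non-singlet $s$-inert $\ba$. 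A secondary technical point is the bookkeeping of which auxiliary fields and anti-fields can occur (as already needed in Lemmata~\ref{lemma:field} and \ref{lemma:Antifields_in_b}) and of the matter contributions, none of which, however, produces an obstruction by the same index argument.
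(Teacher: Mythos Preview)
Your overall strategy---write $\xi_\ba = \ba^I_\mu \Omega^{I\mu}\vol$ and constrain $\Omega$ by power counting and the structure of $H^0(s)$---matches the paper's. The paper's proof is, however, considerably shorter and avoids your case-by-case elimination of auxiliary fields, matter terms and Pontryagin-type pieces: it simply observes that, after a single subtraction, the coefficient $D$ (your $\Omega$) must be an $s$-closed, background-gauge-covariant, dynamical-gauge-invariant Lie-algebra-valued one-form of mass dimension three, and then appeals directly to the known description of $H^0(s)$ to conclude that only the c-number $\bar\nabla^\mu\bar F_{\mu\nu}$ survives, which vanishes because the background is on-shell.

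There is a genuine gap in your ``forcing'' step. You assert that, since $\ba$ is $s$-inert and carries a non-singlet index, $s$-closedness of $\xi_\ba$ modulo $\ud$ forces $\Omega$ itself to be $s$-invariant (up to $\ud$). This overlooks that $\ba$ is not an arbitrary external one-form: it lies in $T\sS$, i.e., it satisfies the \emph{linearised} Yang--Mills equation $P_\lin\ba=0$. Consequently $s\xi_\ba=0\ \mathrm{mod}\ \ud$ only yields $s\Omega = P_\lin E$ for some $E$ of ghost number one (because $\Tr_N(\ba\wedge * P_\lin E)$ is a total derivative by self-adjointness of $P_\lin$ and $P_\lin\ba=0$), not $s\Omega=0$. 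You never invoke the linearised equation for $\ba$---only the background equation for $\bar\cA$---so your deduction is incomplete. The paper closes exactly this gap: it first classifies the possible $E$ (by dimension, $E$ is a combination of $\bar\nabla C$ and $[A,C]$; using $P_\lin\circ\bar\nabla=0$ one normalises $E = c\, sA$), and then subtracts $c\,P_\lin A$ from $D$, which does not change the cohomology class of $\xi_\ba$ (again because $\Tr_N(\ba\wedge *P_\lin A)$ is $\ud$-exact) but renders the new $D'$ genuinely $s$-closed. Only then does the enumeration you attempt become legitimate. Your later remark that $\ba^{I\mu}\nabla^\nu F^I_{\nu\mu}\vol$ is $s$-exact via the Koszul--Tate part of $s$ is correct but does not repair the logical gap, since the full Yang--Mills current is not what your flawed ``$s$-invariance of $\Omega$'' criterion would have produced in the first place.
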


\begin{proof}
A general element $\xi_\ba \in H^0_4(s | \ud)$ is of the form $\xi_\ba = \Tr_N (\ba \wedge * D)$ with $D$ a section of $\p \otimes \Omega^1$ of vanishing ghost number and mass dimension 3 (one mass dimension is contributed by $\ba$) and $*$ the Hodge dual. Furthermore, $\xi_\ba$ being $s$ closed modulo $\ud$ implies that $s D = P_\lin E$, where $P_\lin$ is the linearized Yang-Mills operator (which vanishes on $\ba$, so that $s \xi_\ba$ is a total derivative) for some $E$ of ghost number and mass dimension 1. It must thus be a linear combination of $\bar \nabla C^I$ and $[A, C]^I$. But as $P_\lin \circ \bar \nabla = 0$, the first term can always be added to $E^I$, so unless it is trivial, we can write it as $E^I = c ( \bar \nabla C + [A, C] )^I = c s A^I$ with a numerical coefficient $c$. Again by $P_\lin \ba = 0$, $\Tr_N (\ba \wedge * P_\lin A)$ is a total derivative, so $\xi'_\ba = \Tr_N (\ba \wedge * (D - c P_\lin A))$ is in the same cohomology class as $\xi_\ba$. However, the corresponding $D' = D - c P_\lin A$ is now $s$ closed: $s D' = P_\lin( E - E ) = 0$. We are thus looking for an $s$ closed Lie algebra valued one-form $D$ of vanishing ghost number and mass dimension 3. Hence, $D$ must at the same time transform covariantly under background gauge transformations and be invariant under a gauge transformation of the dynamical fields. It follows from the well-known characterization of $H^0(s)$ that it must either be trivial, for example $D^I = s \bar \nabla \tilde C^I$, or a c-number. By power counting, the only possibility is $D^I_\nu = \bar \nabla^\mu \bar F_{\mu \nu}$, which however vanishes by our assumption that the background is on-shell.
\end{proof}

Analogously to the removal of the gauge anomaly, we can thus, by redefinition of time-ordered products $T( \bar \delta_{\lambda \ba} \Psi \otimes \eti{S_\ia} )$ (which does not spoil perturbative agreement), achieve that 
\beq
\label{eq:LowerOrder_1b}
 A^k(\bar \delta_{\lambda \ba} \Psi \otimes \et{S_\ia}) \sim 0  \qquad \forall k < m.
\eeq
Now as both $\bar \delta_{\lambda \ba} \Psi$ and the anomaly $a^m$ are anti-field independent, it follows from \eqref{eq:LowerOrder_1a}, \eqref{eq:LowerOrder_1b}, and \eqref{eq:ExpandedCC} that 
\beq
\label{eq:s_A_bar_delta_Psi}
 (S, A^m(\bar \delta_{\lambda \ba} \Psi \otimes \et{S_\ia})) \sim A^m((S, \bar \delta_{\lambda \ba} \Psi) \otimes \et{S_\ia}).
\eeq
In particular, with \eqref{eq:hat_cD_A_A_S},
\beq
\label{eq:cD_A_s_A}
 \hat \cD^\lambda_{\ba} A^m(\et{S_\ia}) \sim (S, A^m(\bar \delta_{\lambda \ba} \Psi \otimes \et{S_\ia})).
\eeq
This implies that there must be fields $b^m_\ba$, $c^m_\ba$ such that
\beq
\label{eq:D_a_a}
 \hat \cD_\ba a^m = s b^m_\ba + \ud c^m_\ba.
\eeq
As $\hat \cD_\ba$ commutes with $\ud$ and $s$, \cf \eqref{eq:s_cD}, it is well-defined on the cohomology $H(s | \ud)$. The above shows that if the assumptions of Theorem~\ref{thm:PA} are fulfilled, then $\hat \cD_\ba a^m$ is trivial in $H^1(s | \ud)$, i.e., $a^m$ is background independent in cohomology. If the non-abelian anomaly \eqref{eq:NonabelianAnomaly} turned out to be not background independent in cohomology, this would already prove Theorem~\ref{thm:main}. However, as shown in \cite{ManesStoraZumino}, the non-abelian anomaly \eqref{eq:NonabelianAnomaly} fulfills \eqref{eq:D_a_a}, with $b^m_\ba$ explicitly given by
\beq
\label{eq:b_a_m}
 b_\ba^m = \Tr_N \left[ - \tfrac{1}{2} A (\ba \bar F + \bar F \ba) + \tfrac{1}{6} \ba (A \bar \nabla A + \bar \nabla A A) + \tfrac{1}{12} A^2 (\ba A - A \ba) \right].
\eeq
Hence, \eqref{eq:D_a_a} can not yet be used to rule out the non-abelian anomaly \eqref{eq:NonabelianAnomaly}. Nevertheless, we have learned something: By comparing \eqref{eq:D_a_a} and \eqref{eq:cD_A_s_A} and recalling the definition of $\xi^m_\ba$ given below Corollary~\ref{cor:cc}, we have
\beq
\label{eq:psi_b}
 \xi^m_\ba = b^m_\ba + d^m_\ba
\eeq
for some $d^m_\ba \in H^0(s | \ud)$ (in particular, this includes $\ud$ exact terms). This ambiguity is due to the fact that \eqref{eq:D_a_a} defines $b^m_{\ba}$ only up to terms in the kernel of $s$ (modulo $\ud$). Furthermore, a trivial element of $H^0(s | \ud)$ can be removed from $\xi^m_\ba$ by a redefinition of $T(\bar \delta_{\lambda \ba} \Psi \otimes \eti{S_\ia})$, which can be performed without destroying perturbative agreement. However, by Lemma~\ref{lemma:xi_ba_cohomology}, the relevant cohomology class is trivial, so that we can ignore the ambiguity $d^m_\ba$ in \eqref{eq:psi_b}.

We would now like to compute a further ``background derivative'', namely of \eqref{eq:psi_b}. As already indicated below \eqref{eq:cD_cD}, we promote $\hat \cD$ to a differential on the tangent space $T \sS$ of background variations, and interpret \eqref{eq:psi_b} as an equality of background variation one-forms. To deal with background variation $p$-forms, we introduce some notation. A general background variation $p$-form field $H$ can be expressed as
\beq
 H^{\mu_1 \dots \mu_p}_{I_1 \dots I_p}(x) \hat \cD \Xi^{I_1}_{\mu_1}(x) \wedge \dots \wedge \hat \cD \Xi^{I_p}_{\mu_p}(x)
\eeq
with $I_k$ Lie algebra indices and $\hat \cD \Xi$ dual to a tangent vector $\ba_\mu^I(x)$, meaning that the evaluation of a tangent vector $\ba$ in $\hat \cD \Xi$ is given by $\hat \cD \Xi^I_\mu(x)(\ba) = \ba^I_\mu(x)$. On such an expression, the differential $\hat \cD$ acts as
\beq
 \hat \cD \left( H^{\mu_1 \dots \mu_p}_{I_1 \dots I_p}(x) \hat \cD \Xi^{I_1}_{\mu_1}(x) \wedge \dots \wedge \hat \cD \Xi^{I_p}_{\mu_p}(x) \right) = (-1)^{\eps_H} \hat \cD_{I_0}^{\mu_0} H^{\mu_1 \dots \mu_p}_{I_1 \dots I_p}(x) \hat \cD \Xi^{I_0}_{\mu_0}(x) \wedge \dots \wedge \hat \cD \Xi^{I_p}_{\mu_p}(x),
\eeq
where $\eps_H$ is the Grassmann parity of $H^{\mu_1 \dots \mu_p}_{I_1 \dots I_p}(x)$ and $\hat \cD_{I}^{\mu}$ is defined by
\beq
 \hat \cD_\ba H(x) = \hat \cD_I^\mu H(x) \ba^I_\mu(x).
\eeq
By \eqref{eq:cD_cD}, \eqref{eq:s_cD}, the differential $\hat \cD$ is nilpotent and anticommutes with $s$. On expressions of the form \eqref{eq:AnomalyExpansion}, one similarly defines $\hat \cD^\lambda$ (due to the presence of the cut-off $\lambda$ this is a differential only \wrt the relation $\sim$).

In the notation just introduced, $A^m(\bar \delta_{\lambda \ba} \Psi \otimes \et{S_\ia})$ can be seen as the evaluation of the background variation one form $A^m(\bar \delta^{\lambda \mu}_I \Psi \otimes \et{S_\ia}) \hat \cD \Xi_\mu^I$ in $\ba$ (the superscript $\lambda$ at $\bar \delta$ indicates multiplication with the cut-off $\lambda$). As the gauge fixing fermion is of second order in the dynamical fields and does not contain anti-fields, and the anomaly vanishes if one of the factors is a linear field \cite[Lemma~A.2]{BgInd}, we may replace $\bar \delta^{\lambda \mu}_I \Psi$ by $\hat \cD^{\lambda \mu}_I \Psi$ in this expression. We now compute a $\hat \cD^\lambda$ differential of this background variation one-form. A further application of field independence \eqref{eq:AnomalyFieldIndependence}, background independence \eqref{eq:PA_Thm_2}, and \eqref{eq:BrokenBackgroundIndependenceCutoff} yields, for $k \leq m$,
\begin{multline}
\label{eq:hat_cd_A_delta_Psi}
 \hat \cD^\lambda \left( A^k(\bar \delta^{\lambda \mu}_I \Psi \otimes \et{S_\ia}) \hat \cD \Xi_\mu^I \right) \\
 \sim \left( A^k( \hat \cD^{\lambda \mu_0}_{I_0} \hat \cD^{\lambda \mu_1}_{I_1} \Psi \otimes \et{S_\ia} ) + A^k( \hat \cD^{\lambda \mu_1}_{I_1} \Psi \otimes (S, \hat \cD^{\lambda \mu_0}_{I_0} \Psi) \otimes \et{S_\ia} ) \right) \hat \cD \Xi_{\mu_0}^{I_0} \wedge \hat \cD \Xi_{\mu_1}^{I_1}.
\end{multline}
The first term on the \rhs vanishes (\wrt $\sim$), by nilpotency of $\hat \cD^\lambda$. To deal with the second term, we note that, analogously to \eqref{eq:LowerOrder_1a}, \eqref{eq:LowerOrder_1b}, we have, for all $k < m$,\footnote{The second identity follows directly from \eqref{eq:hat_cd_A_delta_Psi} and \eqref{eq:LowerOrder_1b}. From the second identity, \eqref{eq:LowerOrder_1b}, and \eqref{eq:ExpandedCC_2}, we conclude that the field corresponding to the \lhs of the first identity is $s$ closed (modulo $\ud$). But there are no such fields of ghost number $-1$, anti-symmetric and linear in $\ba$, $\ba'$ and of mass dimension four (two of which are already contributed by $\ba$ and $\ba'$): By power counting, the only possible fields with these properties are $\Tr_N ([\ba^\mu, \ba'_\mu] B^{\ddag})$ and $\Tr_N ([\ba^\mu, \ba'_\mu] \tilde C ) \vol$, but no linear combination of these is $s$ closed modulo $\ud$.}
\begin{align}
\label{eq:LowerOrder_2}
 A^k(\bar \delta_{\lambda \ba} \Psi \otimes \bar \delta_{\lambda \ba'} \Psi \otimes \et{S_\ia}) & \sim 0, &
 A^k( (S, \bar \delta_{\lambda \ba} \Psi) \otimes \bar \delta_{\lambda \ba'} \Psi \otimes \et{S_\ia}) & \sim 0.
\end{align}
Then, with \eqref{eq:ExpandedCC_2}, we have (recall that $\Psi$ is fermionic)
\begin{align}
 & (S, A^m( \hat \cD^{\lambda \mu_1}_{I_1} \Psi \otimes \hat \cD^{\lambda \mu_0}_{I_0} \Psi \otimes \et{S_\ia} )) \nn \\
 & \sim A^m( (S, \hat \cD^{\lambda \mu_1}_{I_1} \Psi) \otimes \hat \cD^{\lambda \mu_0}_{I_0} \Psi \otimes \et{S_\ia} ) - A^m( \hat \cD^{\lambda \mu_1}_{I_1} \Psi \otimes (S, \hat \cD^{\lambda \mu_0}_{I_0} \Psi) \otimes \et{S_\ia} ) \nn \\
 & \sim A^m( \hat \cD^{\lambda \mu_0}_{I_0} \Psi \otimes (S, \hat \cD^{\lambda \mu_1}_{I_1} \Psi) \otimes \et{S_\ia} ) - A^m( \hat \cD^{\lambda \mu_1}_{I_1} \Psi \otimes (S, \hat \cD^{\lambda \mu_0}_{I_0} \Psi) \otimes \et{S_\ia} ).
\end{align}
Hence,
\beq
 \hat \cD^\lambda \left( A^m(\bar \delta^{\lambda \mu}_I \Psi \otimes \et{S_\ia}) \hat \cD \Xi_\mu^I \right) \sim - \tfrac{1}{2} (S, A^m( \hat \cD^{\lambda \mu_1}_{I_1} \Psi \otimes \hat \cD^{\lambda \mu_0}_{I_0} \Psi \otimes \et{S_\ia} ) ) \hat \cD \Xi_{\mu_0}^{I_0} \wedge \hat \cD \Xi_{\mu_1}^{I_1}.
\eeq
In terms of fields, this means that, modulo $\ud$ exact terms,
\beq
\label{eq:hat_d_b_s_chi}
 \hat \cD b^m = \hat \cD \xi^m = s \chi^m,
\eeq
with $\chi^m = \frac{1}{2} A^m( \hat \cD^\lambda \Psi \otimes \hat \cD^\lambda \Psi \otimes \et{S_\ia} ) |_{\lambda = 1}$.

On the other hand, we can directly compute the \lhs of \eqref{eq:hat_d_b_s_chi} for $b^m$ corresponding to the non-abelian anomaly, i.e., given by \eqref{eq:b_a_m}. We obtain, up to $\ud$ exact terms,
\beq
\label{eq:cD_chi}
 ( \hat \cD b^m )(\ba_1, \ba_2) = \tfrac{1}{2} \Tr_N \left[ (\ba_1 \ba_2 - \ba_2 \ba_1) \bar F \right].
\eeq
The r.h.s., being a c-number and not $\ud$ exact, is not $s$ exact modulo $\ud$. 
It vanishes for general background connections and variations $\bar a_1$, $\bar a_2$ thereof if and only if the $d$ symbol of the trace $\Tr_N$ vanishes (in which case also the non-abelian anomaly \eqref{eq:NonabelianAnomaly} vanishes, as discussed below \eqref{eq:AbelianAnomaly}).
Hence, \eqref{eq:hat_d_b_s_chi} is not fulfilled. Thus, also the non-abelian anomaly can not be present under the assumptions of Theorem~\ref{thm:main}. This concludes the proof of Theorem~\ref{thm:main}.



\begin{remark}
The expression on the \rhs of \eqref{eq:cD_chi} also occurs as the obstruction $E(a_1, a_2)$, \cf \eqref{eq:E}, to remove a violation of perturbative agreement indicated by a non-vanishing covariant divergence of the free current of the form $T( \bar \nabla_\mu J^{\mu I}_0 ) \propto \Tr_N [ T^I \bar F \bar F ]$, as shown in \cite{GlobalAnomalies}.
This indicates a connection between the cohomological classification of obstructions to perturbative agreement and anomaly freedom (at least for the non-abelian anomalies).
\end{remark}

\section{Conclusion}
\label{sec:Conclusion}

We have proven under quite general conditions that the absence of gauge anomalies can be determined at the one-loop level. As our proof is based on perturbative agreement, which is a formalization of background independence (but see \cite{BgInd} for a thorough discussion of background independence in gauge theories), one might also phrase our result as stating that background independence \wrt changes in the background connection implies absence of gauge anomalies.

A shortcoming of our analysis is that we restricted to vanishing background bosonic matter fields. We briefly sketch how to include these and how to possibly extend our result to that case. At the level of the action, non-trivial background matter fields can be simply introduced by replacing the bosonic matter field $\vp$ by $\bar \Phi + \vp$, with $\bar \Phi$ the matter background field. The action of the BV operator on $\vp$ then has to be modified to $s \vp = - \rho(C) (\bar \Phi + \vp)$ so that also the free part $s_0$ acts non-trivially on the bosonic matter fields. In the gauge fixing fermion \eqref{eq:Xi}, also a term $\int \tilde C^I H^I(\bar \Phi, \vp) \vol$ would be included, allowing for an $R_\xi$ gauge (in which terms with single derivatives are eliminated from the free part of the action). 

As for anomalies, the non-abelian anomaly \eqref{eq:NonabelianAnomaly} is independent of matter fields (in particular background matter fields), so if it is absent for vanishing matter background fields (as is guaranteed if the assumptions of Theorem~\ref{thm:main} are fulfilled), it will also be absent for non-trivial matter background fields. The same is true for the abelian anomalies of the form \eqref{eq:AbelianChiral}, which are well-known to potentially occur. It remains to rule out abelian anomalies depending on the background matter field. As discussed in Remark~\ref{rem:AbelianAnomaly}, one may hope to rule these out on general grounds. Alternatively, one can slightly strengthen the requirements of Theorem~\ref{thm:main} such that also perturbative agreement \wrt changes in the background matter field holds and rule out the occurrence of anomalies depending explicitly on the background matter field $\bar \Phi$ (not just on the combination $\bar \Phi + \vp$ which has already been ruled out by Theorem~\ref{thm:main} for $\bar \Phi = 0$) by using again the background independence of the anomaly.

Finally, we remark that the technique used here (which was essentially model-independent) can apparently not be straightforwardly used to prove the original version of the Adler-Bardeen theorem \cite{AdlerBardeen}, i.e., the non-renormalization of the chiral anomaly in QED: An essential ingredient of our proof is that a simple $\cO(\hbar)$ criterion decides on whether perturbative agreement can be fulfilled. To show this, one proceeds inductively (in $\deg_\Phi$, the total number of fields) \cite{HollandsWaldStress, BackgroundIndependence}. This procedure is model independent in the sense that the specific model considered only enters through the $\cO(\hbar)$ criterion that needs to be checked (the on-shell vanishing of the free current). However, if the relevant criterion is not fulfilled at lowest order, then not only will perturbative agreement not be fulfilled at higher order, but we also lose control about how it is violated, at least unless one engages in a model-dependent analysis.

\subsection*{Acknowledgements}

I would like to thank Stefan Hollands for useful discussions.


\begin{thebibliography}{10}

\bibitem{AdlerBardeen}
S.L. Adler and W.A. Bardeen,
\newblock {Absence of higher order corrections in the anomalous axial vector
  divergence equation}, Phys.\ Rev. 182 (1969) 1517.

\bibitem{Zee72}
A. Zee,
\newblock Axial vector anomalies and the scaling property of field theory,
  Phys.\ Rev.\ Lett. 29 (1972) 1198.

\bibitem{LowensteinSchroer73}
J.H. Lowenstein and B. Schroer,
\newblock {Comment on the absence of radiative corrections to the anomaly of
  the axial-vector current}, Phys.\ Rev. D7 (1973) 1929.

\bibitem{Kopper:2011aa}
C. Kopper and B. Lev\^eque,
\newblock {Regularized path integrals and anomalies: U(1) chiral gauge theory},
  J. Math. Phys. 53 (2012) 022305, [arXiv:1112.3526].

\bibitem{Mastropietro2020}
V. Mastropietro,
\newblock {Emergent Adler-Bardeen theorem}, JHEP 03 (2020) 095,
  [arXiv:2001.00183].

\bibitem{CostaEtAl77}
G. Costa et~al.,
\newblock {Nonabelian Gauge Theories and Triangle Anomalies}, Nuovo Cim. A38
  (1977) 373.

\bibitem{BandelloniEtAl80}
G. Bandelloni et~al.,
\newblock On the cancellation of hard anomalies in gauge field models: A
  regularization independent proof, Commun.\ Math.\ Phys. 72 (1980) 239.

\bibitem{PiguetSorella92}
O. Piguet and S.P. Sorella,
\newblock {On the finiteness of the BRS modulo d cocycles}, Nucl.\ Phys. B381
  (1992) 373, [arXiv:hep-th/9302122].

\bibitem{PiguetSorella93}
O. Piguet and S.P. Sorella,
\newblock {Adler-Bardeen theorem and vanishing of the gauge beta function},
  Nucl.\ Phys. B395 (1993) 661, [arXiv:hep-th/9302123].

\bibitem{AnselmiAdlerBardeen}
D. Anselmi,
\newblock {Adler-Bardeen theorem and manifest anomaly cancellation to all
  orders in gauge theories}, Eur. Phys. J. C 74 (2014) 3083, [arXiv:1402.6453].

\bibitem{Mastropietro:2020bhz}
V. Mastropietro,
\newblock {Anomaly cancellation condition in an effective nonperturbative
  electroweak theory}, Phys. Rev. D 103 (2021) 013009, [arXiv:2010.11016].

\bibitem{Mastropietro:2021dxm}
V. Mastropietro,
\newblock {Anomaly cancellation condition in lattice effective electroweak
  theory}, J. Math. Phys. 64 (2023) 032303, [arXiv:2111.14442].

\bibitem{Mastropietro:2023wnb}
V. Mastropietro,
\newblock {Vanishing of the anomaly in lattice chiral gauge theory}, Ann. H.
  Poincar{\'e}  (2023), [arXiv:2303.02790].

\bibitem{Callan70}
C.G. Callan, Jr.,
\newblock {Broken scale invariance in scalar field theory}, Phys.\ Rev. D2
  (1970) 1541.

\bibitem{Symanzik70}
K. Symanzik,
\newblock {Small distance behavior in field theory and power counting},
  Commun.\ Math.\ Phys. 18 (1970) 227.

\bibitem{Lowenstein71}
J.H. Lowenstein,
\newblock {Differential vertex operations in Lagrangian field theory}, Commun.\
  Math.\ Phys. 24 (1971) 1.

\bibitem{Lam72}
Y.M.P. Lam,
\newblock {Perturbation Lagrangian theory for scalar fields: Ward-Takahasi
  identity and current algebra}, Phys.\ Rev. D6 (1972) 2145.

\bibitem{GengMarshak}
C.Q. Geng and R.E. Marshak,
\newblock {Uniqueness of Quark and Lepton Representations in the Standard Model
  From the Anomalies Viewpoint}, Phys. Rev. D 39 (1989) 693.

\bibitem{HollandsWaldWick}
S. Hollands and R.M. Wald,
\newblock {Local Wick polynomials and time ordered products of quantum fields
  in curved space-time}, Commun.\ Math.\ Phys. 223 (2001) 289,
  [arXiv:gr-qc/0103074].

\bibitem{HollandsWaldTO}
S. Hollands and R.M. Wald,
\newblock {Existence of local covariant time ordered products of quantum fields
  in curved space-time}, Commun.\ Math.\ Phys. 231 (2002) 309,
  [arXiv:gr-qc/0111108].

\bibitem{HollandsYM}
S. Hollands,
\newblock {Renormalized Quantum Yang-Mills Fields in Curved Spacetime}, Rev.\
  Math.\ Phys. 20 (2008) 1033, [arXiv:0705.3340].

\bibitem{HollandsWaldReview}
S. Hollands and R.M. Wald,
\newblock {Quantum fields in curved spacetime}, Phys.\ Rept. 574 (2015) 1,
  [arXiv:1401.2026].

\bibitem{RejznerBook}
K. Rejzner,
\newblock Perturbative algebraic quantum field theory
(Springer, Cham, 2016).

\bibitem{HollandsWaldRG}
S. {Hollands} and R.M. {Wald},
\newblock {On the Renormalization Group in Curved Spacetime}, Commun.\ Math.\
  Phys. 237 (2003) 123, [arXiv:gr-qc/0209029].

\bibitem{HollandsWaldStress}
S. Hollands and R.M. Wald,
\newblock {Conservation of the stress tensor in interacting quantum field
  theory in curved spacetimes}, Rev.\ Math.\ Phys. 17 (2005) 227,
  [arXiv:gr-qc/0404074].

\bibitem{BackgroundIndependence}
J. Zahn,
\newblock {Locally covariant charged fields and background independence}, Rev.\
  Math.\ Phys. 27 (2015) 1550017, [arXiv:1311.7661].

\bibitem{BgInd}
M. Taslimi~Tehrani and J. Zahn,
\newblock {Background independence in gauge theories}, Annales Henri Poincar\'e
  21 (2020) 1135, [arXiv:1804.07640].

\bibitem{BardeenZumino}
W.A. Bardeen and B. Zumino,
\newblock {Consistent and Covariant Anomalies in Gauge and Gravitational
  Theories}, Nucl.\ Phys. B244 (1984) 421.

\bibitem{BarnichBrandtHenneaux00}
G. Barnich, F. Brandt and M. Henneaux,
\newblock {Local BRST cohomology in gauge theories}, Phys.\ Rept. 338 (2000)
  439, [arXiv:hep-th/0002245].

\bibitem{LocCovDirac}
J. Zahn,
\newblock {The renormalized locally covariant Dirac field}, Rev.\ Math.\ Phys.
  26 (2014) 1330012, [arXiv:1210.4031].

\bibitem{ChiralFermions}
J. Zahn,
\newblock {Locally covariant chiral fermions and anomalies}, Nucl.\ Phys. B890
  (2015) 1, [arXiv:1407.1994].

\bibitem{Arms1981}
J.M. Arms,
\newblock The structure of the solution set for the yang-mills equations,
  Mathematical Proceedings of the Cambridge Philosophical Society 90 (1981)
  361.

\bibitem{DuetschBook}
M. D{\"u}tsch,
\newblock {From Classical Field Theory to Perturbative Quantum Field
  Theory} (Birkh\"auser, 2019).

\bibitem{TehraniBRST}
M. Taslimi~Tehrani,
\newblock {Quantum BRST charge in gauge theories in curved space-time}, J.
  Math. Phys. 60 (2019) 012304, [arXiv:1703.04148].

\bibitem{KlubergSternZuber1}
H. Kluberg-Stern and J.B. Zuber,
\newblock {Renormalization of Nonabelian Gauge Theories in a Background Field
  Gauge. 1. Green Functions}, Phys. Rev. D12 (1975) 482.

\bibitem{DuetschFredenhagenLoopExpansion}
M. D{\"u}tsch and K. Fredenhagen,
\newblock {Algebraic quantum field theory, perturbation theory, and the loop
  expansion}, Commun.\ Math.\ Phys. 219 (2001) 5, [arXiv:hep-th/0001129].

\bibitem{DuetschFredenhagenDeformation}
M. D{\"u}tsch and K. Fredenhagen,
\newblock Perturbative algebraic field theory, and deformation quantization,
\newblock In ``Mathematical physics in mathematics and physics ({S}iena, 2000)'',
  Fields Inst. Commun. Vol.~30, pp. 151--160, Amer. Math. Soc., 
  2001, [arXiv:hep-th/0101079].

\bibitem{WaldmannBook}
S. Waldmann,
\newblock Poisson-Geometrie und Deformationsquantisierung (Springer, Berlin,
  2007).

\bibitem{BrunettiFredenhagenScalingDegree}
R. Brunetti and K. Fredenhagen,
\newblock {Microlocal analysis and interacting quantum field theories:
  Renormalization on physical backgrounds}, Commun.\ Math.\ Phys. 208 (2000)
  623, [arXiv:math-ph/9903028].

\bibitem{EpsteinGlaser}
H. Epstein and V. Glaser,
\newblock {The role of locality in perturbation theory}, Ann.\ Inst.\ H.\
  Poincar{\'e} 19 (1973) 211.

\bibitem{PopineauStora}
G. Popineau and R. Stora,
\newblock {A pedagogical remark on the main theorem of perturbative
  renormalization theory}, Nucl. Phys. B 912 (2016) 70.

\bibitem{FrobBV}
M.B. Fr\"ob,
\newblock {Anomalies in Time-Ordered Products and Applications to the BV--BRST
  Formulation of Quantum Gauge Theories}, Commun. Math. Phys. 372 (2019) 281,
  [arXiv:1803.10235].

\bibitem{BDF09}
R. Brunetti, M. D{\"u}tsch and K. Fredenhagen,
\newblock {Perturbative Algebraic Quantum Field Theory and the Renormalization
  Groups}, Adv.\ Theor.\ Math.\ Phys. 13 (2009) 1541, [arXiv:0901.2038].

\bibitem{ManesStoraZumino}
J. Manes, R. Stora and B. Zumino,
\newblock {Algebraic Study of Chiral Anomalies}, Commun. Math. Phys. 102 (1985)
  157.

\bibitem{PiguetSorella}
O. Piguet and S.P. Sorella,
\newblock Algebraic Renormalization, Lecture Notes in Physics Vol.~28
  (Springer, 1995).

\bibitem{Nakayama2018}
Y. Nakayama,
\newblock {Realization of impossible anomalies}, Phys. Rev. D 98 (2018) 085002,
  [arXiv:1804.02940].

\bibitem{WessZuminoConsistencyCondition}
J. Wess and B. Zumino,
\newblock {Consequences of anomalous Ward identities}, Phys.\ Lett. B37 (1971)
  95.

\bibitem{GlobalAnomalies}
A. Schenkel and J. Zahn,
\newblock {Global anomalies on Lorentzian space-times}, Annales Henri Poincar\'e
  18 (2017) 2693, [arXiv:1609.06562].

\end{thebibliography}

\end{document}